\theoremstyle{plain}
\newtheorem{theorem}{Theorem}
\newtheorem{lemma}{Lemma}
\newtheorem{corollary}{Corollary}
\theoremstyle{definition}
\newtheorem{definition}{Definition}
\def\eps{{\varepsilon}}
\newcommand{\emphi}[1]{\emph{#1}}
\newcommand{\VC}{\ensuremath{\mathsf{VC}}\xspace}
\newcommand{\etal}{\textit{et~al.}\xspace}
\newcommand{\Komlos}{K\'o{}m{}l{}o{}s\xspace}
\newcommand{\R}{\mathcal{R}}
\newcommand{\E}{\mathcal{E}}
\newcommand{\cardin}[1]{\lvert {#1} \rvert}
\newcommand{\pth}[1]{\!\left({#1}\right)}
\title{%
  On interference among moving sensors and related problems\thanks{A preliminary version of this paper appeared in the proceedings of the {\em European Symposium on Algorithms} (ESA 2016)~\cite{ckkrrs-okrsaa-15}.
M.~J.~K.~was partially supported by grant 1884/16 from the Israel Science Foundation.
M.~K.~was partially supported by MEXT KAKENHI Nos.~12H00855, 15H02665, and 17K12635. A.~v.~R. and M.~R. were supported by JST ERATO Grant Number JPMJER1305, Japan. S.~S.~was partially supported by Grant 1136/12 from the Israel Science Foundation and by the Swiss National Science Foundation Grants 200020144531 and 200021-137574.}
}
\author{%
  Jean-Lou~De~Carufel%
  \thanks{University of Ottawa, Ottawa, Canada. 
          \texttt{jdecaruf@uottawa.ca}}\,
  \and
  Matthew~J.~Katz,%
  \thanks{Ben-Gurion University of the Negev, Beer-Sheva, Israel.
          \texttt{matya@cs.bgu.ac.il,shakhar@math.bgu.ac.il}}\,
  \and
  Matias~Korman,%
  \thanks{Tohoku University, Sendai, Japan. 
          \texttt{mati@dais.is.tohoku.ac.jp}}\,
  \and
  Andr\'e~van~Renssen,%
  \thanks{National Institute of Informatics, Tokyo, Japan. 
          \texttt{\{andre,marcel\}@nii.ac.jp}}\, $^{,}$\thanks{JST, ERATO, Kawarabayashi Large Graph Project}\,
  \and
  Marcel~Roeloffzen,\footnotemark[5]\, $^{,}$\footnotemark[6]\,
  \and
  Shakhar~Smorodinsky\footnotemark[2]\, $^{,}$\thanks{\'{E}cole Polytechnique F\'{e}d\'{e}rale de Lausanne, Lausanne, Switzerland.}
}
\date{}
\begin{document}
\maketitle

\begin{abstract}
We show that for any set of $n$ points moving along ``simple'' trajectories (i.e., each coordinate is described with a polynomial of bounded degree) in $\Re^d$ and any parameter $2 \le k \le n$, one can select a fixed non-empty subset of the points of size $O(k \log k)$, such that the Voronoi diagram of this subset is ``balanced'' at any given time (i.e., it contains $O(n/k)$ points per cell). We also show that the bound $O(k \log k)$ is near optimal even for the one dimensional case in which points move linearly in time. As applications, we show that one can assign communication radii to the sensors of a network of $n$ moving sensors so that at any given time their interference is $O(\sqrt{n\log n})$. We also show some results in kinetic approximate range counting and kinetic discrepancy. In order to obtain these results, we extend well-known results from $\eps$-net theory to kinetic environments. 
\end{abstract}

\section{Introduction}
We consider the following kinetic facility location problem: given $n$ clients (i.e., points) that are moving in $\Re^d$ along simple trajectories and a parameter $k\leq n$, we wish to select few of them to become facilities to serve the remaining clients. We follow the usual assumption that at any instant of time a client is served by its nearest facility. Our aim is to select the facilities so that none serves too many customers. Specifically, we wish to maintain the invariant that at any given time the number of clients served by each of the chosen facilities is bounded by $n/k$. 

The pigeon-hole principle directly implies that we cannot select fewer than $k$ facilities. 
Our main result is that a subset of size $O(k \log k)$ will suffice. We also show that one cannot improve this bound to $O(k)$, even for $d=1$. As an application, we show how to construct a communication graph among a set of $n$ moving sensors
such that at any given time, the interference of the communication graph is bounded by $O(\sqrt{n \log n})$ (and its hop-diameter is three). Intuitively speaking, the interference of a sensor is the in-degree (i.e., the number of sensors that can communicate to him directly, see more details in Section~\ref{sec-applis-interference}). This bound is near optimal as already, in the static case, there are examples in which any communication graph has $\Omega(\sqrt{n})$ interference~\cite{HT-interference}.

In order to obtain our results we use the machinery of geometric hypergraphs and the theory of \VC-dimension and $\eps$-nets.
By a geometric hypergraph (also called a range-space) we mean the following: suppose we are given a finite set $P$ of points in $\Re^d$ and a family of simple geometric regions, such as the family of all halfspaces in $\Re^d$. Then we consider the combinatorial structure of the set system $(P,\{h\cap P\})$
where $h$ is any halfspace. A key property of such hypergraphs is bounded \VC-dimension (see Section~\ref{sec-prelim} for exact definitions).
In this paper we study a more general structure by allowing the underlying set of points to move along some ``reasonable'' trajectories (i.e., the coordinates of each point can be described with a polynomial function of bounded degree).
Even though the static case is well-known, little research has been done for the case in which the points move.
We show that those more complex hypergraphs, defined as the union of all hypergraphs obtained at all possible times, still have a bounded \VC-dimension. 

In addition to the above mentioned applications, we believe that the bounded \VC-dimension of such hypergraphs is of independent interest and to the best of our knowledge has not been observed before.
We hope that this paper will have many follow-up applications, since bounded \VC-dimension has applications in many other areas of mathematics and computer science.

The paper is organized as follows: in Section~\ref{sec-prelim} we introduce several key concepts as well as review known results that hold for static range spaces. In Section~\ref{sec-kinet} we extend these results to the kinetic case. In Section~\ref{sec-applis-balanced-vor} we prove our main result concerning  Voronoi diagrams for moving points. The interference problem mentioned above is studied in Section~\ref{sec-applis-interference}. In Section~\ref{sec-other} we present two additional applications that follow from known results and the newly introduced kinetic $\eps$-net machinery. We make a few final remarks in Section~\ref{sec-conclusions}.

\section{Preliminaries and Previous Work}\label{sec-prelim}
A hypergraph $H=(V,\E)$ is a pair of sets such that $\E \subseteq
2^V$ (where $2^V$ denotes the {\em power set} containing all subsets of $V$). A geometric hypergraph is one that can be realized in a
geometric way. For example, consider the hypergraph $H = (V,\E)$,
where $V$ is a finite subset of $\Re^d$ and $\E$ consists of all
subsets of $V$ that can be cut-off from $V$ by intersecting it
with a shape belonging to some family of ``nice'' geometric shapes,
such as the family of all halfspaces.
The elements of $V$ are called {\em
vertices}, and the elements of $\E$ are called {\em hyperedges}.
For a subset
$V' \subseteq V$, the hypergraph $H[V'] = (V',\{V' \cap S \colon S \in
\E\})$ is the {\em sub-hypergraph} induced by $V'$.

We consider the following families of geometric hypergraphs:
Let $P$ be a set of points in $\Re^2$ (or, in general, in $\Re^d$) and let $\R$ be a family of regions in the same space. We refer to the hypergraph $H=(P,\{ P \cap r \colon r \in \R\})$ as the hypergraph induced by $P$ with respect to $\R$.
When $\R$ is clear from the context, we sometimes refer to it as
the hypergraph induced by $P$. In the literature, hypergraphs
that are induced by points with respect to geometric regions of
some specific kind are also referred to as {\em range spaces}.
We sometimes abuse the notation and write $(P,\R)$, instead of $H=(P,E)$, where $E = \{P \cap r \colon r \in \R\}$.

\subsection*{$\eps$-nets and VC-dimension}
A subset $T \subset V$ is called a \emphi{transversal} (or a \emphi{hitting set}) of a
hypergraph $H=(V,\E)$, if it intersects all sets of $\E$. The
\emphi{transversal number} of $H$, denoted by $\tau(H)$, is the
smallest possible cardinality of a transversal of $H$. The
fundamental notion of a transversal of a hypergraph is central in
many areas of combinatorics and its relatives. In computational
geometry, there is a particular interest in transversals, since
many geometric problems can be rephrased as questions on the
transversal number of certain hypergraphs~\cite{MATOUSEK-disc}. An important special
case arises when we are interested in finding a small size set $N
\subset V$ that intersects all ``relatively large'' sets of
$\E$.  This is captured in the notion of an $\eps$-net for a
hypergraph:
\begin{definition}[$\eps$-net]
    Let $H=(V,\E)$ be a hypergraph with $V$ finite. Let $\eps \in
    [0,1]$ be a real number. A set $N \subseteq V$ (not necessarily in
    $\E$) is called an \emphi{$\eps$-net} for $H$ if for every hyperedge $S
    \in \E$ with $|S| \geq \eps|V|$ we have $S \cap N \neq
    \emptyset$.\footnote{An analogous definition applies when $V$ is not necessarily finite and
$H$ is endowed with a probability measure.}
\end{definition}

The well-known result of
Haussler and Welzl \cite{HW-eps-net} provides a combinatorial condition
on hypergraphs that guarantees the existence of small $\eps$-nets
(see below). This requires the following well-studied notion of the
Vapnik-Chervonenkis dimension \cite{VC71}:

\begin{definition}[\VC-dimension]
Let $H=(V,\E)$ be a hypergraph. A subset $X \subset V$ (not
necessarily in $\E$) is said to be \emphi{shattered} by $H$ if
$\{X\cap S\colon  S \in \E\}=2^X$. The \emphi{Vapnik-Chervonenkis
dimension}, also denoted the \emphi{\VC-dimension} of $H$, is the
maximum size of a subset of $V$ shattered by $H$.
\end{definition}

\subsection*{Relation between $\eps$-nets and the \VC-dimension}

Haussler and Welzl~\cite{HW-eps-net} proved the following fundamental
theorem regarding the existence of small $\eps$-nets for
hypergraphs with small \VC-dimension.

\begin{theorem}[$\eps$-net theorem]\label{theo_epsnet}
    Let $H=(V,\E)$ be a hypergraph with \VC-dimension $d$. For
    every $\eps \in (0,1)$, there
    exists an $\eps$-net $N \subset V$ with cardinality at most
    $\displaystyle O\pth{  \frac{d}{\eps}\log\frac{1}{\eps} }$.
\end{theorem}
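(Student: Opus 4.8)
The plan is to prove existence by the probabilistic method: draw $N$ as a sample of $m=\Theta\pth{\frac{d}{\eps}\log\frac1\eps}$ points of $V$ chosen independently and uniformly at random, and show that with positive probability $N$ meets every hyperedge $S\in\E$ with $\cardin{S}\ge\eps\cardin{V}$. Since $V$ is finite it is convenient to identify ``$\cardin{S}\ge\eps\cardin{V}$'' with ``$S$ has measure at least $\eps$'' under the uniform distribution. Let $E_1$ be the bad event that some such heavy $S$ is entirely missed by $N$. The whole difficulty is that $\E$ may be enormous, so a naive union bound over all heavy hyperedges is hopeless; the point of the argument is to replace $\E$ by its trace on a small sample, where the \VC-dimension caps the number of distinct hyperedges.

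The device is a second, independent ``ghost'' sample $M$ of the same size $m$ together with a symmetrization argument in two steps. First I would introduce the event $E_2$ that some heavy $S$ is missed by $N$ yet hit at least $\eps m/2$ times by $M$, and show $\Pr[E_2]\ge\tfrac12\Pr[E_1]$: conditioned on a witness $S$ for $E_1$, the count $\cardin{S\cap M}$ is a sum of independent indicators with mean at least $\eps m$, so once $m\ge 8/\eps$ a Chebyshev (or Chernoff) estimate gives $\Pr[\cardin{S\cap M}\ge\eps m/2]\ge\tfrac12$ independently of $N$. Second I would bound $\Pr[E_2]$ from above by conditioning on the unordered pool $Z=N\cup M$ of $2m$ points and regarding the split of $Z$ into $N$ and $M$ as a uniformly random balanced partition (by exchangeability of the i.i.d.\ draws). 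For a fixed $S$ with $\cardin{S\cap Z}=\ell\ge\eps m/2$, the probability over this random split that all $\ell$ points of $S\cap Z$ fall in $M$, so that $S\cap N=\emptyset$, is at most $2^{-\ell}\le 2^{-\eps m/2}$.

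The key leverage now is that, although $\E$ is unrestricted, the number of distinct traces $\{S\cap Z \colon S\in\E\}$ on the fixed $2m$-point pool is only $\sum_{i=0}^{d}\binom{2m}{i}=O\pth{(2m)^d}$ by the Sauer--Shelah lemma, which is exactly where the hypothesis $\mathsf{VC}(H)=d$ enters. A union bound over these polynomially many traces yields $\Pr[E_2]\le O\pth{(2m)^d}\,2^{-\eps m/2}$, hence $\Pr[E_1]\le 2\Pr[E_2]\le O\pth{(2m)^d}\,2^{-\eps m/2+1}$. Because the Sauer--Shelah polynomial has degree $d$, the choice $m=\Theta\pth{\frac{d}{\eps}\log\frac1\eps}$ makes the polynomial factor negligible against the exponential decay, driving the right-hand side below $1$; therefore a random sample of this size is an $\eps$-net with positive probability and one of cardinality at most $m$ must exist. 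I expect the symmetrization step---passing rigorously from $E_1$ to $E_2$ and justifying that conditioning on $Z$ turns the split into a uniform random partition---to be the main obstacle, since it is precisely what lets the huge family $\E$ be tamed by the finite Sauer--Shelah count; the tail estimates and the final calibration of $m$ are then routine.
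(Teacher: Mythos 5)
The paper itself does not prove Theorem~\ref{theo_epsnet}: it is quoted from Haussler and Welzl \cite{HW-eps-net}, so there is no internal proof to compare against. Your proposal reconstructs the classical double-sampling (ghost-sample) proof of that result, and its architecture is exactly right: the events $E_1$ and $E_2$, the Chebyshev step giving $\Pr[E_2]\ge\tfrac12\Pr[E_1]$ once $m\ge 8/\eps$, the exchangeability argument turning the split of the pool $Z$ into a uniform balanced partition, the $2^{-\ell}$ bound per trace, and the union bound over the traces counted via Sauer--Shelah.

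One quantitative step does not close as written, namely the final calibration. You bound the number of traces by $\sum_{i=0}^{d}\binom{2m}{i}=O\pth{(2m)^d}$ and assert that $m=\Theta\pth{\frac{d}{\eps}\log\frac1\eps}$ drives $O\pth{(2m)^d}\,2^{-\eps m/2}$ below $1$. Taking logarithms, this requires (up to constants) $d\log(2m)<\eps m/2$; but $m$ itself carries a factor of $d$, so the left-hand side contains a $d\log d$ term, while the right-hand side is only $\Theta\pth{d\log\frac1\eps}$. For $d$ large compared to $1/\eps$ (e.g.\ $\eps=\tfrac14$ and $d\to\infty$) the inequality is false, and the crude estimate $(2m)^d$ only yields nets of size $O\pth{\frac{d}{\eps}\log\frac{d}{\eps}}$ --- Haussler and Welzl's original bound, not the one stated here. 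The repair is standard and small: use the sharper form of the Sauer--Shelah bound, $\sum_{i=0}^{d}\binom{2m}{i}\le\pth{\frac{2em}{d}}^{d}$. Then the logarithm of the trace count is $d\log\frac{2em}{d}$, the factor $d$ inside the logarithm cancels against the $d$ in $m$, and with $m=C\frac{d}{\eps}\log\frac1\eps$ the exponent becomes $d\cdot O\pth{\log\frac1\eps}-\Theta\pth{Cd\log\frac1\eps}$, which is negative for a suitable absolute constant $C$ (say for $\eps\le\tfrac12$; the range $\eps>\tfrac12$ is absorbed into the constant). With that substitution your proof is complete. Note also that this paper only ever invokes the theorem with $d$ a fixed constant, and in that regime your argument as written already suffices.
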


In fact, it can be shown that a random sample of vertices of size $O(\frac{d}{\eps}\log\frac{1}{\eps})$ is an
$\eps$-net for $H$ with a positive constant probability (see~\cite{ConstructionEpsilonNets} for details on how to compute such nets).

Many hypergraphs studied in computational geometry and learning theory
have a ``small'' \VC-dimension,
where by ``small'' we mean a constant independent of
the number of vertices of the underlying hypergraph.
It is known that whenever range spaces are defined through semi-algebraic sets of
constant description complexity (i.e., sets defined as a Boolean combination of
a constant number of polynomial equations and inequalities of constant maximum degree),
the resulting hypergraph has finite \VC-dimension.
Halfspaces, balls, boxes, etc. are examples of such sets; see, e.g.,~\cite{MATOUSEK,PA95} for more details.

Thus, by Theorem~\ref{theo_epsnet}, these hypergraphs admit ``small'' size $\eps$-nets.
\Komlos \etal \cite{KPW} proved that the bound
$O(\frac{d}{\eps}\log\frac{1}{\eps})$ on the size of an $\eps$-net
for hypergraphs with \VC-dimension $d$ is best possible. Namely,
for a constant $d$, they construct a hypergraph $H$ with
\VC-dimension $d$ such that any $\eps$-net for $H$ must have
size of at least $\Omega(\frac{1}{\eps}\log\frac{1}{\eps})$. Recently, several breakthrough results provided better lower and upper bounds on the size of $\eps$-nets in several special cases \cite{Alon-nets,AES09,PachT11}.

\section{Kinetic hypergraphs}\label{sec-kinet}
We start by extending the concept of geometric hypergraphs to the kinetic model. Let $P = \{p_1,\ldots,p_n\}$ denote a set of $n$ moving points in $\Re^d$, where each point is moving along some ``simple'' trajectory. That is, each $p_i$ is a function $p_i: [0,\infty) \rightarrow \Re^d$ of the form $p_i(t)=(x^i_1(t),\ldots,x^i_d(t))$,
where $x^i_j(t)$ is a univariate polynomial ($1\leq j \leq d$). For a given real number $t \geq 0$ and a subset $P' \subset P$, we denote by $P'(t)$ the fixed set of points $\{p(t) \colon  p \in P'\}$.

Let $\cal R$ be a (not necessarily finite) family of ranges; for example, the family of all halfspaces in $\Re^d$. We define the {\em kinetic hypergraph} induced by $\cal R$:
\begin{definition}[kinetic hypergraph]
Let $P$ be a set of moving points in $\Re^d$ and let $\cal R$ be a family of ranges.
Let $(P, \E)$ denote the hypergraph
where $\E$ consists of all subsets $P' \subseteq P$ for which there exists a time $t$ and a range $r \in \cal R$ such that
$P'(t) = P(t) \cap r$. We call $(P,\E)$ the {\em kinetic hypergraph} induced by $\cal R$.
\end{definition}

As in the static case we abuse the notation and denote the kinetic hypergraph by $(P,\cal R)$. In order to apply our techniques, we need the following ``bounded description complexity'' assumption concerning the movement of the points of $P$. We say that a point $p_i= p_i(t)=(x^i_1(t),\ldots,x^i_d(t))\in P$ moves with {\em description complexity} $s>0$ if for each $1 \leq j \leq d$,
the univariate polynomial $x^i_j(t)$ has degree at most $s$. In the remainder of this paper, we assume that $P(0)$ is in ``general position''. That is, at time $t=0$ no $d + 1$ points of $P(0)$ are on a common hyperplane. This assumption can be removed through usual symbolic perturbation techniques.

\subsection{VC-Dimension of kinetic hypergraphs}
In this section we prove that for many of the static range spaces that have small \VC-dimension, their kinetic counterparts also have small \VC-dimension. We start with the family ${\cal H}_d$ of all halfspaces in $\Re^d$.

\begin{theorem}\label{kinetic-halfspaces}
Let $P \subset \Re^d$ be a set of moving points with bounded description complexity $s$.  Then, the kinetic-range space $(P, {\cal H}_d)$ has  \VC-dimension bounded by $O(d\log d+\log s\log \log s)$.
\end{theorem}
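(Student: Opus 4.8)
The plan is to reduce the kinetic halfspace range space to a *static* range space of bounded VC-dimension by lifting the time-parametrized incidences into a higher-dimensional "parameter + time" space, and then to bound the VC-dimension of that static space via the standard fact that semi-algebraic set systems of bounded description complexity have bounded VC-dimension. The key observation is that a moving point $p_i(t)$ is contained in a halfspace $h = \{x : \langle a,x\rangle \le b\}$ at time $t$ precisely when $\langle a, p_i(t)\rangle \le b$, and since each coordinate $x^i_j(t)$ is a polynomial in $t$ of degree at most $s$, the condition $\langle a, p_i(t)\rangle - b \le 0$ is a polynomial inequality in $t$ whose coefficients depend polynomially (in fact linearly) on $(a,b)$. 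So whether $p_i$ ever lies in $h$ — i.e.\ whether there exists $t \ge 0$ with this polynomial nonpositive — is a condition on $(a,b)$.

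First I would set up the right ground set and dual formulation. Fix the $n$ moving points; the hyperedges of the kinetic hypergraph are indexed by pairs $(h,t)$ via $P' = \{p_i : \langle a, p_i(t)\rangle \le b\}$. I would think of each point $p_i$ as the *subset of the parameter space* $(a,b,t)$-space on which it is captured, and then project out $t$. Concretely, to each moving point $p_i$ I associate the region
\[
R_i = \bigl\{(a,b) : \exists\, t \ge 0,\ \langle a, p_i(t)\rangle - b \le 0\bigr\}
\]
in the halfspace-parameter space $\Re^{d+1}$. A hyperedge of the kinetic hypergraph is shattering-relevant only through which points it captures, and a set of points $\{p_{i_1},\dots,p_{i_m}\}$ is shattered by the kinetic hypergraph iff the corresponding regions $R_{i_1},\dots,R_{i_m}$ shatter the dual parameter space in the ordinary (static) sense. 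Thus the VC-dimension of $(P,\mathcal{H}_d)$ equals the VC-dimension of the dual range space on parameter space whose ranges are the $R_i$.

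The crux is to show that the regions $R_i$ are semi-algebraic sets of bounded description complexity, so that the standard bound applies. The existential quantifier over $t$ is the obstacle: a priori $R_i$ is defined with a $\exists t$, which is not syntactically a Boolean combination of polynomial inequalities. I would remove the quantifier either by quantifier elimination over the reals (Tarski–Seidenberg, which guarantees $R_i$ is semi-algebraic with complexity bounded in terms of $d$ and $s$), or more explicitly: the univariate polynomial $q_{(a,b)}(t) = \langle a,p_i(t)\rangle - b$ has degree at most $s$, and it attains a nonpositive value on $[0,\infty)$ iff $\min_{t\ge 0} q_{(a,b)}(t)\le 0$; this minimum is controlled by the sign pattern of $q$ at the roots of its derivative $q'$, i.e.\ by a bounded-complexity semi-algebraic condition on the coefficient vector $(a,b)$ (which is linear in $(a,b)$, so of controlled degree in these parameters). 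Either route yields that $R_i$ is a Boolean combination of $O(\mathrm{poly}(s))$ polynomial inequalities, each of degree bounded in terms of $s$ and $d$, in the $d+1$ parameters. The logarithmic shape of the final bound $O(d\log d + \log s\log\log s)$ is exactly what one gets from plugging these complexity parameters into the sharpened bounds (of the form $O(d'\log D)$ or $O(d' + \log\log D)$-type, where $d'$ is the dimension and $D$ the degree) for the VC-dimension of range spaces defined by bounded-complexity semi-algebraic sets, as in the references cited in the excerpt.

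I expect the main obstacle to be the careful bookkeeping of the description complexity through the elimination of the time quantifier, and in particular tracking how the degree in $t$ (namely $s$) versus the dimension $d+1$ of the parameter space enter the final VC bound separately — since the theorem claims an asymmetric bound ($\log s\log\log s$ for the time degree versus $d\log d$ for the dimension), the two contributions must be isolated rather than lumped into a single complexity parameter. This suggests invoking a refined VC bound for semi-algebraic systems that is polynomial-in-degree-and-number-of-polynomials but only logarithmic in the relevant quantities, and then optimizing the split between the $t$-degree and the ambient dimension.
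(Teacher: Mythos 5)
Your reduction has a genuine gap at its core: the regions $R_i=\{(a,b):\exists\, t\ge 0,\ \langle a,p_i(t)\rangle\le b\}$ are obtained by projecting out the time variable \emph{separately for each point}, and this erases exactly the constraint that defines the kinetic hypergraph, namely that all points of a hyperedge lie in the halfspace \emph{at a common time}. Membership of $(a,b)$ in several $R_i$'s only says that each $p_i$ visits the halfspace at some time $t_i$ of its own, while a kinetic hyperedge needs one shared $t$; conversely, kinetic shattering forces the excluded points to avoid the halfspace only at that single instant, whereas $(a,b)\notin R_j$ demands that $p_j$ avoid it for \emph{all} time. So your claimed equivalence (``$X$ is shattered by the kinetic hypergraph iff the $R_i$ shatter the parameter space'') fails in both directions. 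A one-dimensional example: $p_1(t)=t$, $p_2(t)=2t-1$, with ranges the rays of $\Re$. The pair is kinetically shattered (a right ray at $t=2$ captures $p_2$ alone, at $t=1/2$ it captures $p_1$ alone, etc.), yet $R_1\setminus R_2=\emptyset$, since both points eventually enter every right ray and $p_2$ enters every left ray that $p_1$ ever enters; so the pair is not shattered by the projected regions. The quantifier elimination you flag as ``the main obstacle'' is therefore addressing the wrong problem: the $\exists t$ can indeed be eliminated, but the resulting set system is simply not the kinetic hypergraph, and its VC-dimension neither upper- nor lower-bounds the quantity in the theorem.

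The repair is to \emph{not} project: work in $\Re^{d+2}$ with coordinates $(a,b,t)$ and associate to $p_i$ the set $\tilde{R}_i=\{(a,b,t):\langle a,p_i(t)\rangle\le b\}$, a single polynomial inequality of degree at most $s+1$. Then the kinetic hyperedges are exactly the sets $\{i: z\in\tilde{R}_i\}$ for $z\in\Re^{d+2}$, and a shattered $m$-set forces all $2^m$ sign-condition cells of the arrangement of the corresponding $\tilde{R}_i$'s to be nonempty; Milnor--Thom-type bounds give $2^m\le (O(ms))^{d+2}$, hence a bounded VC-dimension --- but of order $O(d\log d+d\log s)$, which is weaker in its joint dependence than the stated $O(d\log d+\log s\log\log s)$ (e.g.\ for $s=2^d$ it gives $O(d^2)$ instead of $O(d\log d)$). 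To recover the asymmetric bound one must exploit that each $\tilde{R}_i$ is linear in $(a,b)$ and of degree $s$ only in $t$: for each fixed $t$ the points span only $O(m^d)$ distinct hyperedges, and this collection changes only at the $O(sm^{d+1})$ roots of determinant polynomials, so at most $m^{O(d+\log s)}$ hyperedges ever arise; plugging this primal shatter-function bound into the standard argument (Lemma~\ref{shattered-lemma}) yields $O((d+\log s)\log(d+\log s))=O(d\log d+\log s\log\log s)$. This structured counting is precisely the paper's proof, so the lifted-dual route, once fixed, either loses a factor or collapses back into the paper's argument.
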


To prove Theorem~\ref{kinetic-halfspaces}, we need the following known definition and lemma (see, e.g., \cite{MATOUSEK}).
The \emph{primal shatter function} of a hypergraph $H=(V,\E)$ denoted by $\pi_H$ is a function:
$$
 \pi_H: \{1,\ldots,|V|\} \rightarrow \mathbb N
$$
defined by $\pi_H(i) = max_{V' \subseteq V, \cardin{V'}=i} |H[V']|$, where $\cardin{H[V']}$ denotes the number of hyperedges in the sub-hypergraph $H[V']$.

\begin{lemma}\label{shattered-lemma}
Let $H=(V,\E)$ be a hypergraph whose primal shatter function $\pi_H$ satisfies
$\pi_H(m) =  O(m^c)$ for some constant $c\geq 2$. Then the \VC-dimension of $H$ is $O(c\log c)$. 
\end{lemma}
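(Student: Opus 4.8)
The plan is to relate the \VC-dimension directly to the primal shatter function and then invert the resulting inequality. Let $D$ denote the \VC-dimension of $H$. By definition there is a shattered subset $X \subseteq V$ with $\cardin{X} = D$; since shattering means $\{X \cap S \colon S \in \E\} = 2^X$, the induced sub-hypergraph $H[X]$ has exactly $2^D$ distinct hyperedges. As $\pi_H(D)$ is the maximum number of hyperedges over all induced sub-hypergraphs on $D$ vertices, I obtain
\begin{equation*}
2^D \le \pi_H(D).
\end{equation*}

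Next I would substitute the hypothesis. Writing $\pi_H(m) \le C m^c$ for some absolute constant $C$ and all $m$ in the relevant range, the displayed inequality becomes $2^D \le C D^c$. Taking base-$2$ logarithms yields the transcendental inequality
\begin{equation*}
D \le c \log_2 D + \log_2 C,
\end{equation*}
and everything now reduces to showing that any $D$ satisfying such an inequality must be $O(c\log c)$.

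For this final step I would analyze the function $f(x) = x - c\log_2 x$, whose derivative $1 - c/(x\ln 2)$ is positive once $x > c/\ln 2$, so $f$ is increasing on $[x_0,\infty)$ for any threshold $x_0$ exceeding $c/\ln 2$. Choosing a trial value such as $x_0 = 4c\log_2 c$ (which exceeds $c/\ln 2$ because $\log_2 c \ge 1$ for $c \ge 2$), a direct estimate gives $f(x_0) = c\bigl[(4-1)\log_2 c - \log_2 4 - \log_2\log_2 c\bigr] \ge 2c(\log_2 c - 1)$, using $\log_2\log_2 c \le \log_2 c$. For $c$ large enough this exceeds the constant $\log_2 C$, so if $D$ were larger than $x_0$, monotonicity of $f$ would force $f(D) > \log_2 C$, contradicting $D - c\log_2 D \le \log_2 C$. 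Hence $D \le 4c\log_2 c = O(c\log c)$; the finitely many small values of $c \ge 2$ below that threshold give only a constant bound on $D$, which is absorbed into the big-$O$.

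I do not expect a genuine obstacle here, as the argument is essentially one inequality between the shatter function and $2^D$ followed by elementary analysis. The only mild care needed is in the last step, namely pinning down an explicit threshold $x_0$ and verifying that it lies in the increasing regime of $f$, so that the logarithmic inequality $D \le c\log_2 D + O(1)$ can be inverted cleanly. The assumption $c \ge 2$ is convenient precisely because it keeps $\log_2 c \ge 1$ and places the chosen threshold comfortably inside that regime.
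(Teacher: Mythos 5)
Your proof is correct and follows essentially the same route as the paper's: both hinge on the inequality $2^D \le \pi_H(D) \le C D^c$ obtained from a maximum shattered set, and then invert it to conclude $D = O(c\log c)$. The only difference is that you carry out the inversion explicitly via monotonicity of $x - c\log_2 x$, where the paper simply asserts that the inequality fails once $d \ge 10Ac\log c$.
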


We provide a sketch of the proof of Lemma~\ref{shattered-lemma} for the sake of completeness.
\begin{proof}
Let $d$ denote the \VC-dimension of $H$, and let $V' \subseteq V$ be a shattered subset of cardinality $d$.
On one hand it means that the number of possible subsets of $V'$ that can be realized as the intersection of $V'$ and a hyperedge in $\E$ is $2^d$. On the other hand, by our assumption on $\pi_H$, for a subset of size $d$, there can be at most $Ad^c$ hyperedges in the sub-hypergraph induced by it, for some appropriate constant $A$. In other words
we have $2^d \leq \pi_H(d) \leq Ad^c$. This implies that $d = O(c \log c)$. Indeed, for any $d \geq 10Ac \log c$, the above inequality does not hold, which would give a contradiction. This completes the proof of the lemma.
\end{proof}

\begin{proof}[Proof of Theorem~\ref{kinetic-halfspaces}]
By Lemma~\ref{shattered-lemma} it suffices to bound the primal shatter function $\pi_{{\cal H}_d}(m)$ by a polynomial of constant degree.
It is a well known fact that the number of combinatorially distinct half-spaces determined by $n$ (static) points in $\Re^d$ is $O(n^d)$. This can be easily seen by charging hyperplanes to $d$-tuples of points (using rotations and translations) and observing that each tuple can be charged at most a constant (depending on the dimension $d$) number of times. 
Thus, at any given time, the number of hyperedges is bounded by $O(n^d)$.
Next, note that as $t$ varies,
a combinatorial change in the hypergraph $(P(t),\cal R)$ can occur only when $d+1$ points $p_1(t),\ldots,p_{d+1}(t)$ become affinely dependent. Indeed, a hyperedge is defined by a hyperplane that contains $d$ points of $P(t)$, and that hyperedge changes when an additional point of $P(t)$ crosses the hyperplane (and thus $d+1$ points become affinely dependent).
This happens if and only if the following determinant condition holds:

\begin{align}
\begin{vmatrix}
x^1_1(t) & x^1_2(t) & \cdots & x^1_d(t) & 1 \\
x^2_1(t) & x^2_2(t) & \cdots & x^2_d(t) & 1 \\
\vdots & \vdots & \ddots & \vdots & \vdots \\
x^{d+1}_1(t) & x^{d+1}_2(t) & \cdots & x^{d+1}_d(t) & 1\\
\end{vmatrix}
=0
\end{align} where $x^j_i(t)$ denotes the $i$'th coordinate of $p_j(t)$.
The left side of the equation is a univariate polynomial of degree at most $d s$. By our general position assumption this polynomial is not identically zero and thus can have at most $d s$ solutions.

That is, a tuple of $d+1$ points of $P(t)$ generates at most $ds$ events.
Hence, the total number of such events is bounded by $ds{n \choose d+1}\leq dsn^{d+1}\leq n^{d+1+\log(ds)}$. Between any two events we have a fixed set of at most $n^d$ distinct hyperedges, thus we can have $n^{2d+1+\log (ds)}$ distinct hyperedges along all instants of time.

Since each hyperedge is defined by the points on its boundary, this property is hereditary. That is, for any subset $P' \subseteq P$ the hypergraph $H[P']$ has at most $\cardin{P'}^{2d+1+\log (ds)}\leq \cardin{P'}^{2(d+\log (ds))}$ hyperedges. Thus, the shatter function satisfies $\pi_H(m) = O(m^{3(d+\log s)})$. Then by Lemma~\ref{shattered-lemma}, $(P,{\cal H}_d)$ has \VC-dimension at most $O(d\log d+\log s\log \log s)$ as claimed. 
\end{proof}

\paragraph{Remark} For our purposes, we assume that both $d$ and $s$ are fixed constants, which in particular implies that the VC-dimension is a constant. However, we note that the proof shows that the dependence on the curve complexity $s$ is much softer than the dependence on the dimension $d$. For instance, $s$ could be as large as $2^d$ and still not asymptotically affect the VC-dimension bound\footnote{We thank the anonymous referee that pointed this out to us}.

Theorem~\ref{kinetic-halfspaces} can be further generalized to arbitrary ranges with so-called bounded description complexity as defined below:
\begin{theorem}\label{constant-description}
Let  $\cal R$ be a collection of semi-algebraic subsets of $\Re^d$, each of which can be expressed as a Boolean combination of a constant number of polynomial equations and inequalities of maximum degree $c$ (for some constant $c$).
Let $P$ be a set of moving points in $\Re^d$ with bounded description complexity. Then the kinetic range-space $(P,\cal R)$ has bounded \VC-dimension.
\end{theorem}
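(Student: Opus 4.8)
The plan is to reduce the general statement to the halfspace case that was already handled in Theorem~\ref{kinetic-halfspaces}, by the standard linearization (Veronese) trick combined with the same event-counting argument used there. I first observe that the key quantity to control is again the primal shatter function: by Lemma~\ref{shattered-lemma}, it suffices to show that for any subset $P'\subseteq P$ the induced kinetic sub-hypergraph $H[P']$ has at most $\cardin{P'}^{c'}$ hyperedges for some constant $c'$ depending only on $d$, the description complexity $s$ of the trajectories, and the parameters (degree $c$ and number of polynomials) defining the ranges in $\cal R$.

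The core idea is that each range $r\in\cal R$ is a Boolean combination of a constant number $\ell$ of sign conditions of polynomials $g_1,\dots,g_\ell$ in $d$ variables, each of degree at most $c$. For a \emph{static} point set, the number of distinct subsets cut off by such semi-algebraic sets is polynomial in $\cardin{P'}$: one lifts each point $x\in\Re^d$ to the vector of all monomials of degree at most $c$ in its coordinates, living in $\Re^D$ with $D=\binom{d+c}{c}=O(1)$, so that each polynomial inequality $g_j(x)\le 0$ becomes a \emph{linear} inequality (a halfspace) in $\Re^D$. A Boolean combination of $\ell$ such halfspaces cuts off at most $O\pth{(\cardin{P'})^{D}}$ distinct subsets, since the arrangement of $\cardin{P'}$ dual hyperplanes in $\Re^D$ has polynomially many cells; this is the standard bound on the number of sign patterns for a constant number of constant-degree polynomials (e.g.~via the results of~\cite{MATOUSEK,PA95}). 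Hence at any fixed time the number of hyperedges is $\cardin{P'}^{O(1)}$.

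Next I carry over the kinetic event-counting exactly as in the proof of Theorem~\ref{kinetic-halfspaces}. A combinatorial change in $(P(t),\cal R)$ as $t$ varies can occur only when some point $p_i(t)$ crosses the boundary of some range, i.e.\ when one of the defining polynomials vanishes on that moving point: $g_j\pth{p_i(t)}=0$. Since each coordinate of $p_i(t)$ is a univariate polynomial in $t$ of degree at most $s$, the composition $g_j\pth{p_i(t)}$ is a univariate polynomial in $t$ of degree at most $cs$, which (being not identically zero under general position, after the usual symbolic perturbation) has at most $cs$ roots. Because only a bounded number $\ell$ of polynomials and the points participating in each boundary event matter, the total number of events is bounded by a constant times $\cardin{P'}^{O(1)}$ — a polynomial in $\cardin{P'}$ of degree depending only on $d$, $c$, $\ell$, and $s$. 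Multiplying the per-snapshot count $\cardin{P'}^{O(1)}$ by the event count $\cardin{P'}^{O(1)}$ gives $\pi_{H[P']}(m)=O(m^{c'})$ for a constant $c'$, and Lemma~\ref{shattered-lemma} then yields a constant \VC-dimension.

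The main obstacle, and the step I would spend the most care on, is bounding how many combinatorially distinct subsets a \emph{single} range can cut off, because a range in $\cal R$ is a \emph{Boolean combination} rather than a single sign condition; I need the linearization to interact correctly with the Boolean structure so that the relevant equivalence classes are still governed by a constant-dimensional arrangement. The clean way to handle this is to note that two ranges induce the same hyperedge on $P'$ exactly when every point of $P'$ gets the same sign vector $\pth{\mathrm{sign}\,g_1,\dots,\mathrm{sign}\,g_\ell}$, so it suffices to count sign patterns of the lifted linear forms — a quantity that is polynomial by the classical arrangement bound, independent of how the Boolean combination is taken. Once this is set up, the kinetic part is a routine repetition of the halfspace argument with $ds$ replaced by $cs$ and with the constant $\ell$ tracked through, so I do not expect further difficulty there.
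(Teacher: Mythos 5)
Your overall strategy---Veronese lifting combined with Lemma~\ref{shattered-lemma}---is exactly the paper's intended proof (the paper only sketches it), and your static part is sound: counting sign patterns of the lifted linear forms correctly disposes of the Boolean-combination issue. The genuine gap is in your kinetic event-counting. You define the critical events as the times when a defining polynomial vanishes on a moving point, $g_j\pth{p_i(t)}=0$, and bound their number by noting that each such univariate polynomial has at most $cs$ roots. But the $g_j$ are not fixed polynomials: $\cal R$ is an infinite family (think of all balls), so the defining polynomials have continuously varying coefficients, and for every time $t$ and every point $p_i$ there is \emph{some} range of $\cal R$ whose boundary passes through $p_i(t)$. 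As written, your event set is indexed by (point, range) pairs over an infinite set of ranges, so the claim that ``the total number of events is bounded by a constant times $\cardin{P'}^{O(1)}$'' has no justification; the phrase ``only the points participating in each boundary event matter'' is precisely the missing argument, not a proof of it.

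The correct way to finish---and the route your own first paragraph promises---is to anchor critical ranges at \emph{points} rather than at polynomials, i.e., to run the determinant argument of Theorem~\ref{kinetic-halfspaces} in the lifted space. Concretely: the lifted points $\phi(p_i(t))$ are themselves moving points in $\Re^D$, $D=\binom{d+c}{c}$, with description complexity at most $cs$, and a single sign condition on $P$ is a halfspace condition on $\phi(P)$. So Theorem~\ref{kinetic-halfspaces}, applied as a black box to $\phi(P)$ in $\Re^D$, bounds the shatter function of the kinetic single-sign-condition hypergraph polynomially; the events there are affine dependencies of $D+1$ lifted points, i.e., $D+1$ original points lying on a common hypersurface of degree at most $c$, which is a determinant condition of degree at most $csD$ in $t$ and hence has finitely many roots per tuple. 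Your sign-pattern counting then handles Boolean combinations of $\ell$ sign conditions, raising the polynomial bound to the power $\ell$, and Lemma~\ref{shattered-lemma} concludes. Two small caveats: general position must be restated in the lifted space (no $D+1$ points of $P(0)$ on a common hypersurface of degree at most $c$), which symbolic perturbation again provides; and your per-time static count should be $O\pth{\cardin{P'}^{D\ell}}$ rather than $O\pth{\cardin{P'}^{D}}$, though this does not affect the conclusion.
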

\begin{proof}
The proof combines Lemma~\ref{shattered-lemma} with Theorem~\ref{kinetic-halfspaces} and the so-called Veronese lifting map from Algebraic Geometry. We omit the details as it is very similar to the proof for the static case. See, e.g., \cite{MATOUSEK}.
\end{proof}

\section{Balanced Voronoi cells for moving points}\label{sec-applis-balanced-vor}
In this section we tackle the facility location problem for a set of moving clients, where the goal is to ensure a balanced division of the load among the facilities at any instance of time.
Given a set $P$ of moving points or {\em clients} in $\Re^d$, 
locate a small number of the points to serve as {\em facilities} so that at every instance of time no facility is serving more than $n/k$ clients. We make the usual assumption that each client goes to its nearest facility. In the following we show how to obtain an almost optimal balancing (up to a $\log{k}$ factor), even under the restriction that facilities may be located only at points of $P$.

\begin{theorem}
\label{sparse-cell}
Let $P = \{p_1,\ldots,p_n\}$ be any set of $n$ moving points in $\Re^d$ with bounded description complexity. For any integer $2 \leq k \leq n$, there exists a subset $N \subset P$ of cardinality $O(k \log k)$, such that for any finite point set $S \subset \Re^d$, and for any time $t \ge 0$, each cell of the Voronoi diagram $\mbox{Vor}\:(N(t) \cup S)$ contains at most $O(n/k)$ points of $P(t)$.
\end{theorem}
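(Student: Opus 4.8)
The plan is to pick $N$ as a single, \emph{time-independent} $\eps$-net for the kinetic range space of balls, and then to argue that this one net already controls every cell at every time. Concretely, let $\mathcal{B}$ denote the family of all balls in $\Re^d$. A ball is a semi-algebraic set of constant description complexity (a single quadratic inequality), so by Theorem~\ref{constant-description} the kinetic hypergraph $(P,\mathcal{B})$ has \VC-dimension $O(1)$. Applying the $\eps$-net theorem (Theorem~\ref{theo_epsnet}) with $\eps=1/k$ yields a set $N\subseteq P$ with $\cardin{N}=O\pth{\frac1\eps\log\frac1\eps}=O(k\log k)$ that is an $\eps$-net for $(P,\mathcal{B})$. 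The crucial point is that the hyperedges of this kinetic hypergraph already include every set of the form $P(t)\cap B$ over all times $t$ and all $B\in\mathcal{B}$; hence for every fixed $t$ the restriction $N(t)$ is a $(1/k)$-net for the static ball-hypergraph on $P(t)$. Equivalently, and this is the form I will use throughout: any ball whose interior contains no point of $N(t)$ contains fewer than $n/k$ points of $P(t)$.

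It remains to bound, for an arbitrary finite $S$ and an arbitrary time $t$, the number of points of $P(t)$ in a single cell of $\mbox{Vor}(N(t)\cup S)$. Fix such a cell, that of a site $q\in N(t)\cup S$, and write $C=C_q$; recall $C$ is convex. The key geometric step is a covering lemma: \emph{$C\cap P(t)$ can be covered by a constant number $N_d=O(1)$ of balls whose interiors avoid $N(t)$.} To prove it I would fix a collection of $N_d$ cones with common apex $q$, each of apex angle at most $60^\circ$, that together cover all directions of $\Re^d$ (the number needed is a constant depending only on $d$). In each cone $\kappa$ let $p_\kappa$ be the point of $C\cap P(t)\cap\kappa$ farthest from $q$, at distance $\rho_\kappa=\|p_\kappa-q\|$, and consider the ball $B_\kappa:=B(p_\kappa,\rho_\kappa)$.

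Two facts will drive the lemma. First, $B_\kappa$ contains every point $p\in C\cap P(t)\cap\kappa$: writing $r=\|p-q\|\le\rho_\kappa$ and $\theta=\angle\,p\,q\,p_\kappa\le 60^\circ$, the law of cosines gives $\|p-p_\kappa\|^2=r^2+\rho_\kappa^2-2r\rho_\kappa\cos\theta\le r^2+\rho_\kappa^2-r\rho_\kappa\le\rho_\kappa^2$, using $\cos\theta\ge\frac12$ and $r\le\rho_\kappa$. Second, the interior of $B_\kappa$ contains no point of $N(t)$: since $p_\kappa\in C$, its nearest site in $N(t)\cup S$ is $q$, so every site---in particular every point of $N(t)$---lies at distance at least $\rho_\kappa$ from $p_\kappa$, while $q$ itself sits exactly on $\partial B_\kappa$. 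A routine general-position/limiting argument handles the measure-zero case of a net point lying precisely on $\partial B_\kappa$.

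Combining the two ingredients finishes the argument: each $B_\kappa$ has net-free interior, so by the net property established in the first paragraph it contains fewer than $n/k$ points of $P(t)$; since the $N_d=O(1)$ balls cover $C\cap P(t)$, we obtain $\cardin{C\cap P(t)}\le N_d\cdot\tfrac{n}{k}=O(n/k)$, as claimed. I expect the main obstacle to be the covering lemma rather than the net construction: the conceptual content is that a cell, although possibly unbounded and itself containing the net point $q$, is nonetheless \emph{locally} trapped between $q$ and the net points surrounding it, which is exactly what the cone decomposition converts into a constant number of net-free balls. The role of the arbitrary set $S$ is entirely benign, since the only property used is that $q$ is the nearest site to $p_\kappa$, and this by itself forces all net points to distance at least $\rho_\kappa$, irrespective of how $S$ subdivides space.
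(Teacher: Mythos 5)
Your strategy is viable and genuinely different from the paper's, but one step has a real gap, and the repair you sketch is not the right one. On the comparison: the paper takes its $\eps$-net over the kinetic hypergraph of \emph{bounded cones} (Lemma~\ref{cones-vc}) and argues by contradiction --- if a cell of $\mbox{Vor}(N(t)\cup S)$ held too many points, pigeonhole over constantly many sixty-degree cones at $q(t)$ would give one cone with at least $n/k+1$ of them; that cone is then perturbed \emph{within the range family} so that it contains these points but not $q(t)$, the net supplies a second net point $q'(t)$ inside it, and the same law-of-cosines fact you use shows the farthest point lies in the cell of $q'(t)$, a contradiction. You instead take the net over the kinetic hypergraph of balls and convert the cone decomposition into a direct covering of each cell by $O(1)$ site-free balls. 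The geometric core is identical; what your version buys is a simpler range family (balls follow from Theorem~\ref{constant-description} directly, with no analogue of Lemma~\ref{cones-vc} needed) and a direct rather than contradiction-style count. What the paper's version buys is precisely what bites you: since its ranges are cones, the range can be perturbed to exclude $q(t)$ while remaining a range, so the net property applies verbatim.

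The gap is the open/closed bookkeeping. The statement you extract from the net property --- ``any ball whose \emph{interior} contains no point of $N(t)$ contains fewer than $n/k$ points of $P(t)$'' --- is not an instance of the $\eps$-net guarantee for any single choice of ranges. If the ranges are closed balls, the net property only helps for balls that are net-free \emph{including the boundary}, and your $B_\kappa$ never qualifies when $q\in N(t)$: the site $q$ lies on $\partial B_\kappa$ by construction, not in a measure-zero degenerate case, so this cannot be dismissed by a perturbation or general-position argument. If instead the ranges are open balls (which Theorem~\ref{constant-description} permits), the net-freeness of $B_\kappa$ is fine and robust, but the conclusion now counts only points of $P(t)$ in the \emph{open} ball, while your law-of-cosines estimate is non-strict: a point $p$ with $\theta=60^\circ$ and $\|p-q\|=\rho_\kappa$ sits exactly on $\partial B_\kappa$ and escapes the count; there can be many such points, and general position is of no help since the paper assumes it only at $t=0$ while your configuration occurs at an arbitrary time $t$. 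The correct repair is small: take the ranges to be open balls and the covering cones to have apex angle \emph{strictly} less than $60^\circ$ (still constantly many cover all directions). Then for every $p\neq q$ in the cone one gets the strict inequality $\|p-p_\kappa\|^2 \le r^2+\rho_\kappa^2-2r\rho_\kappa\cos\theta < r^2+\rho_\kappa^2-r\rho_\kappa\le\rho_\kappa^2$, so all such points lie in the open, site-free ball and number fewer than $n/k$; adding back $q$ itself gives at most $n/k+1$ points per cone, hence $O(n/k)$ in total. (Points of $P(t)$ coinciding with $q(t)$ also remain on $\partial B_\kappa$, but that degeneracy is the same one the paper itself dismisses via symbolic perturbation.) With these two adjustments your proof is complete.
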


Before proceeding with the proof of Theorem~\ref{balanced-assignment} we need the following result. An {\em infinite cone} with apex $a \in \Re^d$ and angle $\theta \in \Re$ is defined as 
the set:
$$
\{x \in \Re^d \colon (x-a)\cdot (b-a) \geq \|x-a\| \cos (\theta/2) \} \ ,
$$
where $``\| \|"$ denotes the Euclidean norm, $``\cdot "$ denotes the dot product and $b$ is a vector such that $\| b-a\|=1$ (intuitively speaking, it contains all halflines emanating from $a$ that form an angle of at most $\theta/2$ with $b$). A {\em bounded cone} is the intersection of an infinite cone with a ball centered at its apex.

\begin{lemma}\label{cones-vc}
Let $P$ be a set of moving points in $\Re^d$ with bounded description complexity $s$, and let $\cal R$ be the family of all bounded cones. 
The kinetic hypergraph $(P,\cal R)$ has bounded \VC-dimension.
\end{lemma}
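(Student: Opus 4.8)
The plan is to reduce the statement to Theorem~\ref{constant-description} by showing that every bounded cone is a semi-algebraic set of constant description complexity. Since $P$ is assumed to have bounded description complexity, the conclusion then follows immediately; the entire content of the proof is the verification that a single bounded cone can be described by a bounded number of low-degree polynomial inequalities in the coordinates of a point $x$.

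First I would rewrite the defining inequality of an infinite cone in purely polynomial form. Writing $u = x - a$ and $v = b - a$ (so that $\|v\| = 1$), the defining condition $(x-a)\cdot(b-a) \geq \|x-a\|\cos(\theta/2)$ becomes $u \cdot v \geq \|u\|\cos(\theta/2)$. The only obstacle to this being a polynomial inequality is the Euclidean norm $\|u\|$, which I would eliminate by squaring. Concretely, I would replace the single inequality by the conjunction of $u \cdot v \geq 0$ and $(u \cdot v)^2 \geq \|u\|^2 \cos^2(\theta/2)$. The auxiliary sign constraint $u\cdot v \geq 0$ is essential here: it is precisely what guarantees that squaring recovers the intended half-cone rather than also admitting the reflected (``opposite'') cone on the other side of the apex. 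Both resulting inequalities are polynomial in the coordinates of $x$ of degree at most $2$, with the cone parameters $a$, $v$, and $\cos^2(\theta/2)$ playing the role of coefficients.

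Next I would add the ball constraint, which in the form $\|x-a\|^2 \leq R^2$ is again a polynomial inequality of degree $2$. A bounded cone is therefore exactly the intersection of these three polynomial inequalities, that is, a Boolean combination of a constant number of polynomial inequalities of maximum degree $2$, where this constant is independent of the particular parameters $a$, $b$, $\theta$, and $R$ of the cone. This is precisely the notion of a semi-algebraic set of constant description complexity required by Theorem~\ref{constant-description}. Applying that theorem to $\mathcal{R}$, the family of all bounded cones, yields that the kinetic range space $(P,\mathcal{R})$ has bounded \VC-dimension, as claimed. The only step requiring genuine care is the squaring together with its accompanying sign condition; the remainder is a routine check that the number and degree of the defining polynomials are bounded by absolute constants.
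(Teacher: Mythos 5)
Your proposal is correct and takes essentially the same route as the paper: the paper's proof simply observes that the boundary of a cone is a quadric, so bounded cones are semi-algebraic sets of constant description complexity, and then invokes Theorem~\ref{constant-description}, exactly as you do. You are in fact more explicit than the paper in verifying semi-algebraicity (the squaring together with the sign constraint $u\cdot v\geq 0$); note only that this particular Boolean combination presumes $\cos(\theta/2)\geq 0$, i.e.\ $\theta\leq\pi$, and for wider cones one would use a slightly different Boolean combination --- a cosmetic adjustment that does not affect the argument, especially since the application in Theorem~\ref{sparse-cell} only uses sixty-degree cones.
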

\begin{proof}
As shown above, the boundary surface of an infinite cone is a quadric (i.e., a polynomial of degree $2$). In particular, the ranges of $\R$ can be expressed as semi-algebraic sets of constant description complexity. Thus, by Theorem~\ref{constant-description} the hypergraph $(P,\R)$ has constant \VC-dimension as claimed.
\end{proof}

\begin{proof}[Proof of Theorem~\ref{sparse-cell}]
Let $\cal W$ be the family of all bounded cones in $\Re^d$.
Let $H=(P,\cal W)$ be the corresponding kinetic hypergraph.
By Lemma~\ref{cones-vc}, $H$ has constant \VC-dimension.

We fix $\eps = \frac{1}{k}$ and let $N \subset P$ be an $\eps$-net for $H$ of size $O(k\log k)$ (refer to Theorem~\ref{theo_epsnet}). We show that $N \cup S$ satisfies the desired property. That is, for any time $t \ge 0$ and point $q \in N \cup S$, the Voronoi cell of $q(t)$ in the Voronoi diagram $\text{Vor}\:(N(t) \cup S)$
contains at most $O(n/k)$ points of $P(t)$.
Let $C_d$ be the minimum number of sixty-degree cones that are needed to cover  the unit sphere ${\cal S}^{d-1}$. Using packing arguments it is easily seen that $C_d$ is a constant that depends only on $d$; see, e.g., \cite{Boroczky}.

Assume to the contrary that the Voronoi cell of $q(t)$ contains a subset $P'(t) \subset P(t)$ of more than $C_d n/k$ points. By definition, each of the points in $P'(t)$ is closer to $q(t)$ than to any other point in $N(t) \cup S$.
By the pigeonhole principle,
there is an infinite sixty-degree cone $W$
which has $q(t)$ as its apex
and that contains at least $n/k+1$ of the points of $P'(t)$.
Sort the points of $P'(t)\cap W$ in increasing distance from $q(t)$; let $p_1(t),\ldots,p_j(t)$ be the obtained order (note that by assumption, we have $j \ge n/k+1$). Slightly perturb the cone $W$ and bound it to obtain a bounded cone $W'$ that contains the points $p_1(t),\ldots,p_j(t)$ but does not contain $q(t)$ (or any other point of $P(t)$). This can always be done by usual symbolic perturbation tricks~\cite{em-sstcdc-90}.
Since $N$ is an $\eps$-net with respect to bounded cones, $W'$ must contain a point $q'(t) \in N(t)$ (other than $q(t)$).

Since any point in $P(t) \cap W'$ also belongs to $W$, which is a cone of sixty degrees, any point $p(t) \in P(t) \cap W'$ for which $d(p(t),q(t)) \ge  d(q'(t),q(t))$ must be closer to $q'(t)$ than to $q(t)$ (the apex of the cone). In particular, $p_{j}(t)$ satisfies this inequality and thus belongs to the Voronoi cell of $q'(t)$ (and not of $q(t)$), which is a contradiction. 
\end{proof}

In the remainder of this paper, we use the following corollary of Theorem~\ref{sparse-cell}, with $S=\emptyset$. 

\begin{corollary}\label{balanced-assignment}
Let $P = \{p_1,\ldots,p_n\}$ be any set of $n$ moving points in $\Re^d$ with bounded description complexity. For any integer $2 \leq k \leq n$, there exists a subset $N \subset P$ of cardinality $O(k \log k)$, such that for any time $t \ge 0$, each cell of the Voronoi diagram $\mbox{Vor}\:(N(t))$ contains at most $O(n/k)$ points of $P(t)$.
\end{corollary}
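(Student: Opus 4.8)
The plan is to derive the statement as the immediate specialization of Theorem~\ref{sparse-cell} obtained by setting $S = \emptyset$. First I would observe that the empty set is (vacuously) a finite point set in $\Re^d$, hence a legitimate choice for the parameter $S$ in Theorem~\ref{sparse-cell}. Consequently, the very same subset $N \subset P$ of cardinality $O(k\log k)$ produced by that theorem---namely, a $\tfrac{1}{k}$-net for the kinetic hypergraph of bounded cones, whose existence is guaranteed by Lemma~\ref{cones-vc} together with the $\eps$-net theorem (Theorem~\ref{theo_epsnet})---already satisfies the required property in the case $S = \emptyset$.

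Second, I would note the trivial identity $\mbox{Vor}\:(N(t) \cup \emptyset) = \mbox{Vor}\:(N(t))$, so that the conclusion of Theorem~\ref{sparse-cell} specialized to $S = \emptyset$ reads exactly: for every time $t \ge 0$, each cell of $\mbox{Vor}\:(N(t))$ contains at most $O(n/k)$ points of $P(t)$. This is verbatim the assertion of the corollary, so no additional argument is needed and the proof consists of a single invocation of the theorem.

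Since this is a pure specialization, there is no genuine obstacle here: all of the substance resides in Theorem~\ref{sparse-cell} itself, whose proof combines the bounded \VC-dimension of the kinetic cone hypergraph (Lemma~\ref{cones-vc}) with the $\eps$-net guarantee and the pigeonhole-plus-cone-covering contradiction. The only point worth double-checking is that the auxiliary sites $S$ in Theorem~\ref{sparse-cell} enter solely as \emph{extra} facilities that may only remove points from---never add points to---any given Voronoi cell of a site in $N$; hence discarding them (taking $S = \emptyset$) preserves, and if anything relaxes, the cell-size bound. I would therefore simply state the corollary as following directly from Theorem~\ref{sparse-cell} by choosing $S = \emptyset$.
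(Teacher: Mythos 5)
Your proposal matches the paper exactly: the paper derives this corollary precisely by invoking Theorem~\ref{sparse-cell} with $S=\emptyset$, which is the single step you carry out. Your additional sanity check about the role of $S$ is harmless but unnecessary, since the corollary is literally the statement of the theorem with $S=\emptyset$ substituted in.
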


\paragraph*{Remark}
We note that the bound of $O(k \log k)$ in Corollary~\ref{balanced-assignment} is near optimal. Clearly, if there are only $o(k)$ points in $N$ then by the pigeonhole principle one of the Voronoi cells must contain $\omega(n/k)$ points of $P$. We also note that reducing the size of the set $N$ seems to be out of reach and maybe impossible, even for the one dimensional case where the points move with constant speed. This follows from a recent lower-bound construction of Alon~\cite{Alon-nets} for $\eps$-nets for static hypergraphs consisting of points with respect to strips in the plane.

\begin{corollary}
\label{lower}
Let $P = \{p_1,\ldots,p_n\}$ be any set of $n$ moving points in $\Re$ moving linearly. There does not exist a subset $N \subset P$ of cardinality $O(k)$, such that for any time $t \ge 0$, each cell of the Voronoi diagram $\mbox{Vor}\:(N(t))$ contains at most $O(n/k)$ points of $P(t)$.
\end{corollary}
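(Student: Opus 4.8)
The plan is to derive the lower bound by reducing the kinetic balancing problem to Alon's super-linear lower bound for $\eps$-nets with respect to strips in the plane~\cite{Alon-nets}. The starting point is a lifting that turns one-dimensional linear motion into a static planar point configuration. A point moving linearly in $\Re$ has the form $p_i(t)=a_i+b_i t$, and I associate to it the planar point $p_i^\ast=(a_i,b_i)$; write $P^\ast=\{p_1^\ast,\dots,p_n^\ast\}$. The crucial observation is that a range of the kinetic hypergraph induced by intervals, namely a fixed time $t$ together with an interval $[\alpha,\beta]\subset\Re$, pulls back exactly to a strip: $\{i : \alpha\le p_i(t)\le\beta\}=\{i : p_i^\ast\in S_{t,\alpha,\beta}\}$, where $S_{t,\alpha,\beta}=\{(a,b):\alpha\le a+bt\le\beta\}$ is the strip bounded by the two parallel lines $a+bt=\alpha$ and $a+bt=\beta$. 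Hence the kinetic interval-hypergraph on $P$ is isomorphic to the static strip-hypergraph on $P^\ast$, the orientation of the strip being determined by $t$.

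The key step is a reduction lemma: \emph{if $N\subseteq P$ is a balancing set} (every cell of $\mathrm{Vor}(N(t))$ holds at most $c\,n/k$ points of $P(t)$ for all $t\ge 0$) \emph{then $N^\ast=\{p_i^\ast : p_i\in N\}$ is an $\eps$-net for the strip-hypergraph on $P^\ast$}, with $\eps=\Theta(1/k)$. I would prove this by contraposition. If $N^\ast$ misses a strip $S_{t,\alpha,\beta}$ containing at least $\eps n$ points of $P^\ast$, then at time $t$ the interval $[\alpha,\beta]$ contains at least $\eps n$ points of $P(t)$ yet no site of $N(t)$. Since the interval is free of sites, every point of $P(t)$ inside it has its nearest $N(t)$-site among the at most two sites immediately flanking $[\alpha,\beta]$; thus all $\ge\eps n$ of these points fall into at most two Voronoi cells, and by the pigeonhole principle one cell receives at least $\eps n/2$ of them. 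Taking $\eps$ a suitable constant multiple of $1/k$ (say $\eps=3c/k$) makes $\eps n/2>c\,n/k$, contradicting the balancing property and proving the lemma.

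It then remains to combine the lemma with Alon's construction. For $\eps=\Theta(1/k)$, take Alon's planar point set witnessing that every $\eps$-net for strips has size $\eps^{-1}\,\omega(1)=\omega(k)$, and realize it as the parameter set of $n$ linearly moving points via $p_i(t)=a_i+b_i t$. If some balancing set $N$ had size $O(k)$, the lemma would make $N^\ast$ an $\eps$-net of size $O(k)=O(\eps^{-1})$, contradicting the super-linear lower bound. Therefore no balancing set of size $O(k)$ exists, which is exactly the assertion of the corollary.

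I expect the main obstacle to be the restriction on strip orientations imposed by the model. Since the points move only for $t\ge 0$, the strips $S_{t,\alpha,\beta}$ realize boundary directions spanning only a right angle of the direction space $\mathbb{RP}^1$ (from vertical at $t=0$ toward horizontal as $t\to\infty$), whereas Alon's construction is a priori stated for strips of all orientations. I would resolve this by normalizing the configuration with an affine map of the parameter plane before lifting: affine maps send strips to strips (they preserve parallelism) and act on $\mathbb{RP}^1$ by a projective transformation, so if the directions of the heavy strips in Alon's construction lie in a proper sub-interval of $\mathbb{RP}^1$ — which holds, for instance, whenever only boundedly many directions are essential — a suitable shear compresses them into the admissible window while leaving the strip-hypergraph, and hence the $\eps$-net lower bound, unchanged up to isomorphism. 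One should also confirm that the cited bound is genuinely for strips and is super-linear in $\eps^{-1}$, since it is this super-linearity, rather than any precise logarithmic factor, that is needed to exclude size $O(k)$.
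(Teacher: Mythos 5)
Your proposal is correct in substance and follows essentially the same route as the paper's own proof: dualize the linear motions $p_i(t)=a_i+b_it$ to the planar points $(a_i,b_i)$, observe that a time-$t$ interval query pulls back to a strip, show that a balancing set dualizes to an $\eps$-net for strips with $\eps=\Theta(1/k)$, and invoke Alon's superlinear lower bound. Your reduction lemma (the two flanking cells plus pigeonhole) is exactly the paper's one-line observation that at most $2n/k$ points of $P(t)$ lie between consecutive sites of $N(t)$, just proved with more care.

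Where you genuinely add something is the orientation issue, which the paper glosses over entirely: times $t\ge 0$ produce only strips whose normals lie in one quadrant of direction space, while Alon's bound concerns all strips, so a small net for the restricted family is not a priori excluded. Your proposed repair, however, is conditional and in general insufficient: an affine map of the dual plane acts as a homeomorphism of $\mathbb{RP}^1$, so a single shear can compress the essential directions into the admissible window only if they already lie in a proper subinterval, which you would still need to verify for Alon's construction. An unconditional repair is available. Assume every linear-motion instance admits an $O(k)$ balancing set, and apply this assumption both to the lift of Alon's set $Q=\{(a_i,b_i)\}$ and to the lift of its reflection $\{(a_i,-b_i)\}$; the two resulting nets, pulled back to $Q$, hit every heavy strip whose normal is of the form $(1,s)$ with $s\in\Re$. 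The only strips not covered are those with horizontal normal, i.e., intervals in the $b$-coordinate alone, and for these taking every $\lceil \eps n\rceil$-th point in sorted $b$-order gives an $O(1/\eps)$-size net. The union of the three sets is then a linear-size $\eps$-net for \emph{all} strips on Alon's set, which yields the desired contradiction without any assumption on the directions occurring in his construction.
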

\begin{proof}
Indeed, for the sake of contradiction, assume that each point $p\in P$ is described with a linear equation of the form $p(t) = at + b$ (i.e., a line) and there exists a subset $N\subset P$ such that for any $t>0$ and $q\in N$, the Voronoi cell of $q(t)$ contains at most $n/k$ points of $P(t)$. In particular, this implies that there are at most $2n/k$ points of $P(t)$ between any pair of consecutive points of $N(t)$. If we view the moving points in $\Re$ as lines in $\Re^2$, this is equivalent to choosing a subset of the lines with the property that any vertical segment (i.e., a range of the form $t_0 \times [c,d]$ for constants $t_0>0$, $c,d\in \mathbb{R}$) that intersects more than $2n/k$ of the above lines will also intersect one of the chosen lines. By standard point-line duality in two dimensions, this is equivalent to the problem of finding an $\eps=\frac{2}{k}$-net for points with respect to strips in the plane, which still remains an open problem. Recently, Alon~\cite{Alon-nets} gave a construction showing that such hypergraphs cannot have linear (in $\frac{1}{\eps}$) size $\eps$-nets. Since their problem can be reduced to ours, the same lower bound holds for our problem.
\end{proof}

\section{Low interference for moving transmitters}\label{sec-applis-interference}
Here we show how to tackle the problem of minimizing interference among a set of wireless moving transmitters while keeping the number of topological changes of the underlying network subquadratic.
In the following we define the concept of (receiver-based) {\em interference} of a set of ad-hoc sensors \cite{rwz-2009-amiwasn} (see Figure~\ref{fig_interf}).

\begin{figure}[ht]
\centering
\includegraphics{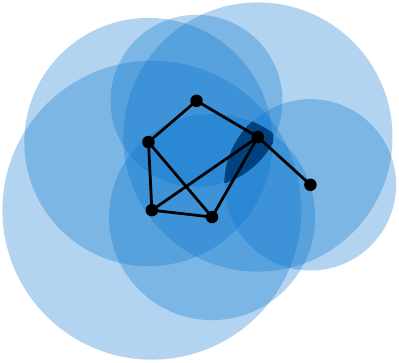}
\caption{Given a set of fixed points in $\Re^2$ and their power assignments represented by disks, the interference is the deepest point in the arrangement of the disks (the highlighted region in the figure). The underlying communication graph is shown with solid edges.}\label{fig_interf}
\end{figure}

\begin{definition}
Let $P=\{p_1,\ldots,p_n\}$ be a set of $n$ points in $\Re^d$ and let $r_1, \ldots,r_n$ be $n$ non-negative reals representing
power levels (or transmission radii) assigned to the points $p_1, \ldots, p_n$, respectively. Let $G=(P,E)$ be the graph associated with this power assignment, where $E = \{ \{p,q\} \colon  d(p,q) \leq min\{r_p,r_q\} \}$. That is, points $p,q$ are neighbors in $G$ if and only if $p$ is contained in the ball centered at $q$ with radius $r_q$ and {\em vice versa}. Let $ D = \{d_1,\ldots, d_n\}$ denote the set of balls where $d_i$ is the ball centered at $p_i$ and having radius $r_i$.
\end{definition}

Let $I(D)$ denote the maximum depth of the arrangement of the balls in $D$. That is $I(D) = \max_{q\in \Re^d} \{|\{d \in D \colon  q \in d \}|\}$. We call $I(D)$ the {\em interference} of $D$, which is also the {\em interference} of the network. Note that both $G$ and $I(D)$ are determined by $P$ and $r_1, \ldots, r_n$. Given a set $P$ of points in $\Re^d$, the {\em interference} of $P$ (denoted $I(P)$) is the smallest possible interference $I(D)$, where $D$ corresponds to a power assignment whose associated graph is connected.
The {\em interference minimization problem} asks for the power assignment for which $I(P)=I(D)$.

Empirically, (in dimension two) it has been observed that networks with high interference have high rates of message collision. This requires messages to be repeated often, which slows down the network and reduces battery life of the sensors \cite{rwz-2009-amiwasn}. Thus, a significant amount of research has focused on the creation of connected networks with low interference (see, e.g., \cite{HT-interference,k-mianbcr-12}). It is known that computing $I(P)$ (or even approximating it by a constant factor) is an NP-complete problem \cite{bbehm-imasn-14}, but some worst-case bounds are known.
\begin{theorem}[\cite{HT-interference}]
Let $P$ be a set of $n$ points in the plane. Then $I(P) = O(\sqrt{n})$. Furthermore, this bound is asymptotically tight, in the sense that for any $n$ there exists a set $P$ of $n$ points such that $I(P) = \Omega(\sqrt {n})$.
\end{theorem}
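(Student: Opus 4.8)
The plan is to prove the two directions separately. I expect the upper bound to follow cleanly from the cone machinery already developed for Theorem~\ref{sparse-cell}, whereas the lower bound is where the genuine difficulty lies.

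For the upper bound I would use a two-level \emph{hub} scheme. Fix $m=\Theta(\sqrt n)$ and select a set $N\subset P$ of $m$ \emph{hubs}; connect the hubs into a single connected \emph{backbone} (say, along a minimum spanning tree of $N$), and attach every remaining point $p$ to its nearest hub $h(p)$ by assigning $p$ the radius $\rho_p=d(p,h(p))$. Each hub is given a radius large enough to reach all points attached to it and to realize its incident backbone edges. In the symmetric model this makes $G$ connected, since every point reaches its hub and the hubs reach one another. The interference then splits into two contributions. The $m=O(\sqrt n)$ hub disks contribute at most $O(\sqrt n)$ to the depth at any location, no matter how large the hub radii are. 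For the attached disks, fix a query point $q$ and partition the directions around $q$ into $C_d=O(1)$ cones of opening angle $60^\circ$ with apex $q$, exactly as in the proof of Theorem~\ref{sparse-cell}. Within one such cone $W$, let $p^\ast$ be the farthest point from $q$ whose disk covers $q$; then $\rho_{p^\ast}=d(p^\ast,h(p^\ast))\ge d(p^\ast,q)$, so the ball of radius $d(p^\ast,q)$ centered at $p^\ast$ is hub-free. By the $60^\circ$-cone property, every point of $W$ within distance $d(p^\ast,q)$ of $q$ lies in this ball, and in particular so does every attached disk of $W$ covering $q$. Hence the bounded cone obtained by intersecting $W$ with that ball is hub-free and, by the balanced property of $N$ (the static analog of Theorem~\ref{sparse-cell} with $\eps=1/\sqrt n$), holds at most $O(\sqrt n)$ points. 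Summing over the $O(1)$ cones bounds the attached depth by $O(\sqrt n)$.

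The one subtlety on this side is that selecting the hubs through the $\eps$-net theorem only yields $m=O(\sqrt n\log n)$ hubs, giving interference $O(\sqrt n\log n)$; obtaining the sharp $O(\sqrt n)$ requires choosing the $O(\sqrt n)$ hubs by a direct, \emph{log-free} balanced partition of the static planar point set (for which cones admit a balanced net of size $O(1/\eps)$). I regard this log removal as the main obstacle for the upper bound.

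For the lower bound I would exhibit a hard instance on which \emph{every} connected assignment is expensive. Place the $n$ points on $\sqrt n$ concentric circles about a common center $o$, where circle $C_i$ has radius $4^i$ and carries $\sqrt n$ evenly spread points. The exponential spacing means the points of a single circle can be linked to one another with tiny radii (of order $4^i/\sqrt n$) that do not reach $o$, but linking any circle to the rest of the graph requires an edge crossing a gap of width $\Omega(4^i)$. A cut argument across each scale, combined with the symmetric requirement $r_u,r_v\ge d(u,v)$ on every edge, then forces at each scale a disk whose radius exceeds its center's distance to $o$, i.e.\ a disk covering $o$. Crucially, the symmetry of the model prevents a few very large disks from accounting for all scales at once: a long edge bridging a distant scale forces \emph{both} of its endpoints to acquire a large radius, so the depth at $o$ cannot be discharged cheaply. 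Making this accounting independent of the unknown topology of the optimal graph --- ensuring that the $\Omega(\sqrt n)$ center-covering disks are genuinely distinct across the $\sqrt n$ scales --- is the principal difficulty of the lower bound, and is where I would spend most of the effort.
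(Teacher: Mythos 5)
First, a point of orientation: the paper never proves this statement; it is imported verbatim from \cite{HT-interference} as known background, so your attempt must be measured against the literature (the upper bound is the Halld\'orsson--Tokuyama construction; the lower bound comes from the exponential-chain example of von Rickenbach \etal). On the upper bound, your hub-plus-cone scheme is structurally sound --- it is essentially the scheme the paper itself uses in Section~\ref{sec-applis-interference}, and your cone step (farthest covering point $p^\ast$, hub-free ball, balanced property) is a correct static specialization of Theorem~\ref{sparse-cell}. But, as you concede, choosing hubs via the $\eps$-net theorem (Theorem~\ref{theo_epsnet}) yields only $O(\sqrt{n\log n})$, which is exactly the paper's \emph{kinetic} bound, not the claimed $O(\sqrt{n})$. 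The step you defer --- that ``cones admit a balanced net of size $O(1/\eps)$'' in the static plane --- is not routine log-shaving; it is the entire content of the static theorem, and it cannot be waved through: the paper's own Corollary~\ref{lower} invokes Alon's lower bound \cite{Alon-nets} showing that for the closely related family of strips (limits of thin wedges) linear-size $\eps$-nets provably do not exist. Any log-free argument would have to exploit the fatness of sixty-degree cones explicitly, or avoid cone nets altogether as \cite{HT-interference} does; none is given, so your upper bound as written establishes only $O(\sqrt{n\log n})$.

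The lower bound has a more fundamental flaw: the accounting you label the ``principal difficulty'' --- forcing $\Omega(\sqrt{n})$ \emph{distinct} disks to cover the fixed center $o$ --- is not merely hard, it is false, so no amount of effort completes the proof in that form. Concretely, for your concentric-circles instance, take the topology that joins each circle into a cycle with tiny radii and attaches every circle to the rest by a single long edge ending at one fixed point $c$ on the innermost circle (choosing on each circle $k$ the attachment point $p_k$ nearly radially aligned with $c$). Symmetry makes $r_c$ enormous, but each $p_k$ needs radius only $d(p_k,c)\approx 4^k-4 < 4^k = d(p_k,o)$, so no disk other than $c$'s covers $o$: the depth at $o$ is $O(1)$, and the $\Omega(\sqrt{n})$ depth materializes at $c$ instead. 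Hence the witness point must be chosen \emph{adaptively}, as a function of the unknown topology, which is exactly how the standard one-dimensional proof proceeds on the exponential chain $x_i=2^i$ (which, placed on a line, already settles the planar lower bound --- the circles are unnecessary): (i) every node having an edge to a node farther from the origin acquires, by symmetry, radius at least its own distance to the origin, so the number of such nodes is at most the interference $I$; (ii) between consecutive such nodes some gap contains at least $(n-I)/I$ points, every edge of a gap point must cross the gap's inner boundary (its left neighbor would otherwise itself be a node of type (i) inside the gap), so all of these disks cover that boundary point, giving $I\ge (n-I)/I$ and thus $I=\Omega(\sqrt{n})$. Your proposal is missing precisely this second, topology-dependent witness, and with it the whole lower bound.
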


Here, we turn our attention to the kinetic version of the interference problem in arbitrary but fixed dimension.
We wish to maintain a connected graph of a set of moving points (representing moving sensors) that always has low interference. Unless the distances between sensors remain constant, no static radii assignment can work for a long period of time (since points will eventually be far from each other). Instead, we describe the network in a combinatorial way. That is, we look for a function $f:P\times [0,\infty) \rightarrow P$ that determines, for each sensor of $P$ and instance of time, its furthest away sensor that must be reached. Then, at time $t$ the communication radius of a sensor $p\in P$ is simply set equal to the distance between $p$ and $f(p,t)$. Ideally, we would like to construct a network that not only has small interference at any instance of time, but also the underlying graph has a small amount of combinatorial changes along time.

Our algorithm to maintain a connected graph is based on the ideas used in \cite{HT-interference} for the static case. We first pick a subset $N \subset P$ of ``hubs''. Those hubs will never change along time and will always have transmission radius big enough to cover all other points.
Each other point in $P\setminus N$ will be assigned at every instance of time to its nearest hub.
In the following we show that a careful choice of hubs will ensure a small interference, and overall small number of combinatorial changes in the radii assignment protocol. To bound the number of combinatorial changes, we need to use the machinery of Davenport-Schinzel sequences:
A finite sequence $\Sigma=(e_1, \ldots, e_m)$ over an alphabet of $n$ symbols is called a \emph{Davenport-Schinzel sequence} of order $t$  when no two consecutive elements of $\Sigma$ are equal, and for any two distinct symbols $x,y$, there does not exist a subsequence where $x$ and $y$ alternate $t + 2$ times. Several bounds are known on the maximum length of Davenport-Schinzel sequences of a given order. In particular, we are interested in upper bounds. See \cite{DSbook} for more details on Davenport-Schinzel sequences.

\begin{theorem}[Upper bound on Davenport-Schinzel sequences \cite{Nivasch}]
\label{the-DS}
A Davenport-Schinzel sequence of order $t$ on $n$ symbols has length at most $O(n2^{O(\alpha(n)^{\lfloor (t-2)/2 \rfloor})})$, where $\alpha(n)$ is the inverse of the Ackermann function.
\end{theorem}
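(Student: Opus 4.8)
The plan is to bound the maximum length $\lambda_t(n)$ of an order-$t$ Davenport--Schinzel sequence on $n$ symbols, from which the stated estimate follows by rewriting the Ackermann-type expression. First I would settle the base cases directly. For $t=1$ the forbidden alternation is $aba$, which forces every symbol to occur exactly once, so $\lambda_1(n)=n$; for $t=2$ the forbidden alternation is $abab$, and a short induction on the number of symbols yields $\lambda_2(n)=2n-1$; the case $t=3$, giving $\lambda_3(n)=\Theta(n\alpha(n))$, is the first place the inverse Ackermann function enters and serves as the anchor for the deep recursion.

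The heart of the argument is a block decomposition together with a recurrence that lowers the order. Call a maximal run of pairwise-distinct consecutive symbols a \emph{block}, and introduce the refined quantity $\Psi_t(n,m)$, the maximum length of an order-$t$ sequence on $n$ symbols that is the concatenation of at most $m$ blocks. I would partition the $m$ blocks into roughly $m/b$ groups of $b$ consecutive blocks and classify each symbol as \emph{local}, if all of its occurrences fall inside a single group, or \emph{global} otherwise. Restricting the local symbols to their own group again yields an order-$t$ sequence with at most $b$ blocks, so their total contribution is $\sum_i \Psi_t(n_i,b)$ where $\sum_i n_i \le n$. For the global symbols, contracting each group to a single representative block and keeping only the first and last occurrence of each global symbol inside each group produces, after a charging argument, a sequence whose order has dropped by two; this part is governed by $\Psi_{t-2}$ on $m/b$ blocks. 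Balancing the two contributions and choosing $b$ appropriately gives a recurrence expressing $\Psi_t$ in terms of $\Psi_t$ on smaller parameters and $\Psi_{t-2}$.

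Finally I would unwind this recurrence by a simultaneous induction on $t$ and on the level of the Ackermann hierarchy, setting $m=n$ at the end to recover $\lambda_t(n)$ from $\Psi_t(n,m)$. Each application of the recursion composes the order-$(t-2)$ bound with itself a number of times corresponding to one step up the Ackermann function, so iterating it $\lfloor (t-2)/2\rfloor$ times from the base cases produces the nested exponent and the factor $2^{O(\alpha(n)^{\lfloor (t-2)/2\rfloor})}$. I expect the main obstacle to be the bookkeeping inside the recurrence: arranging the local/global split so that the order drops by \emph{exactly} two, tracking block counts precisely enough that the group size $b$ can be tuned against successive Ackermann levels, and controlling the accumulated lower-order error terms so that the exponent of $\alpha(n)$ comes out as $\lfloor (t-2)/2\rfloor$ and not larger. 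Obtaining the sharp leading constants and $o(1)$ corrections, as in Nivasch's refined analysis, is exactly the delicate step that separates this bound from the cruder $\lambda_t(n)=O(n\log^* n)$-style estimates obtainable from a looser recurrence.
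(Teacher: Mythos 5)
This theorem is not proved in the paper at all: it is quoted verbatim from Nivasch~\cite{Nivasch} and used as a black box in Section~\ref{sec-applis-interference} to bound the number of envelope changes. So there is no internal proof to compare your attempt against; what you have written is an outline of the literature proof (Agarwal--Sharir--Shor, refined by Nivasch), and it must be judged on its own terms.

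As such an outline, the skeleton (base cases, block decomposition, local/global split, a recurrence for $\Psi_t(n,m)$, induction along the Ackermann hierarchy) is the right one, but the recurrence you describe does not close, for concrete reasons. The order-drop-by-two mechanism is misattributed: the sequence of order $t-2$ consists of the \emph{middle} occurrences of global symbols \emph{inside} a group, i.e.\ occurrences of symbols that also appear in some earlier and some later group, so that any alternation of length $\ell$ inside the group extends to one of length $\ell+2$ in the whole sequence. These live on $b$ blocks inside each group, not on $m/b$ blocks. The contracted sequence on $m/b$ blocks, by contrast, does \emph{not} drop order: an alternation across groups is already an alternation of the original sequence, so the contracted sequence is again of order $t$ and must be handled by a same-order term $\Psi_t(\cdot,m/b)$. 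Your recurrence has exactly these two roles swapped, and the swap is fatal twice over. First, by ``keeping only the first and last occurrence of each global symbol inside each group'' you discard the middle occurrences, but the theorem bounds the total length, so every occurrence must be charged; a single global symbol can occur many times inside one group (e.g.\ $a\,b_1\,a\,b_2\,a\,b_3\cdots$ with $a$ global and the $b_i$ local is legal already at order $3$), and these occurrences are invisible in the contracted $m/b$-block sequence, so the inequality $\text{(global part)} \le O(\Psi_{t-2}(p,m/b))$ is simply false. Second, the Ackermann-type growth $2^{\Theta(\alpha(n)^{\lfloor (t-2)/2\rfloor})}$ arises precisely from the interplay of \emph{two} same-order recursions (local symbols on $b$ blocks and the contracted sequence on $m/b$ blocks) with the within-group lower-order terms; a recurrence of the form ``$\Psi_t$ on smaller parameters plus $\Psi_{t-2}$'' alone cannot produce it. A complete argument also needs the intermediate category you omit: symbols that are global on one side only (starting or ending in the group) yield order-$(t-1)$, not order-$(t-2)$, sequences within their group. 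Repairing all of this is the actual content of the Agarwal--Sharir--Shor/Nivasch proof; for the purposes of this paper, the correct ``proof'' of the statement is the citation.
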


The Ackermann function is a function that grows very rapidly, hence its inverse is usually regarded as a small constant (indeed, it is known that $\alpha(n)\leq 5$ for any input that can be stored explicitly in current computers). Davenport-Schinzel sequences are often used to bound the complexity of upper (or lower) envelopes of polynomial functions. Whenever we have a family of $n$ functions such that no two graphs of those functions cross more than $t$ times (for some bounded constant $t$), we can use Theorem~\ref{the-DS} to bound the complexity of their upper and lower envelope. 

\begin{theorem}
Let $P$ be a set of $n$ moving points in $\Re^d$ with bounded description complexity $s$. Then, there is a power assignment with updates, such that at any given time $t$ the interference of the network is at most $O(\sqrt{n \log n})$. Moreover, the total number of combinatorial changes in the network is at most $O^*(n^{1.5}\sqrt{\log n})$, where the $O^*$ notation hides a term involving the inverse Ackermann function that depends on $d$ and $s$.
\end{theorem}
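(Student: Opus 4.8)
The plan is to follow the hub-based scheme of~\cite{HT-interference}, instantiated with the kinetic balanced-Voronoi machinery of Theorem~\ref{sparse-cell}. First I would fix $k=\Theta(\sqrt{n/\log n})$ and, setting $\eps=1/k$, take $N\subset P$ to be the fixed $\eps$-net for bounded cones of size $O(k\log k)=O(\sqrt{n\log n})$ guaranteed by Lemma~\ref{cones-vc} together with Theorem~\ref{theo_epsnet}. The points of $N$ are the hubs; each hub is permanently assigned a transmission radius large enough to reach every other point, and each $p\in P\setminus N$ is assigned, at every time $t$, the radius $r_p(t)=d(p(t),h_p(t))$, where $h_p$ is the hub nearest to $p$ at time $t$. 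The hubs then form a clique at all times, and every non-hub is joined to its nearest hub, so the graph is connected with hop-diameter three; the radii values vary continuously, and a combinatorial change occurs precisely when some non-hub changes its nearest hub.

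For the interference bound I would fix a time $t$ and an arbitrary witness point $q\in\Re^d$, and split the depth at $q$ into the contribution of hub balls and of non-hub balls. At most $|N|=O(\sqrt{n\log n})$ hub balls can contain $q$. For the non-hubs, observe that $q\in d_p$ means $d(p(t),q)\le r_p(t)=d(p(t),h_p(t))\le d(p(t),h(t))$ for \emph{every} hub $h$, since $h_p$ is the hub nearest to $p$; hence $p(t)$ is at least as close to $q$ as to any site of $N(t)$, i.e.\ $p(t)$ lies in the Voronoi cell of $q$ in $\mathrm{Vor}(N(t)\cup\{q\})$. Applying Theorem~\ref{sparse-cell} with $S=\{q\}$ bounds the number of such $p$ by $O(n/k)=O(\sqrt{n\log n})$. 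Summing the two contributions and balancing $n/k$ against $k\log k$ (which is what forces the choice $k=\Theta(\sqrt{n/\log n})$) yields interference $O(\sqrt{n\log n})$ at every time.

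For the number of combinatorial changes I would count, for each fixed non-hub $p$, how often its nearest hub can change as $t$ grows. This is exactly the number of breakpoints of the lower envelope of the $|N|$ functions $G_h(t)=\|p(t)-h(t)\|^2$. Each $G_h$ is a univariate polynomial of degree at most $2s$, so any two of them agree at most $2s$ times; the lower envelope therefore has complexity $\lambda_{2s}(|N|)=O^*(|N|)$ by Theorem~\ref{the-DS} (with the inverse-Ackermann factor depending on $s$, and through the net size on $d$). Summing over the at most $n$ non-hubs gives $O^*(n\,|N|)=O^*(n\,k\log k)=O^*(n^{1.5}\sqrt{\log n})$ combinatorial changes in total.

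The main obstacle is the interference estimate, and specifically the reduction of the non-hub depth at $q$ to a single Voronoi cell. It is tempting but insufficient to bound the balls by the cell sizes of $\mathrm{Vor}(N(t))$ alone, since a ball free of hubs can straddle many such cells; the clean bound instead requires re-charging everything to the cell of the \emph{added} site $q$, which is exactly why Theorem~\ref{sparse-cell} was stated for an arbitrary auxiliary set $S$ rather than only for $S=\emptyset$. The remaining ingredients---connectivity and hop-diameter, the arithmetic balancing of $k$, and the Davenport--Schinzel envelope count---are routine once this reduction is in place.
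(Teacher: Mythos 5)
Your overall scheme is the same as the paper's: hubs chosen via the kinetic $\eps$-net for bounded cones with $k=\Theta(\sqrt{n/\log n})$, hub radii reaching everything, non-hub radii reaching the nearest hub, interference split into hub and non-hub contributions, and Davenport--Schinzel envelopes for the change count. Your interference argument is in fact \emph{more} explicit than the paper's: the paper only says that ``by Theorem~\ref{sparse-cell} no point $q$ can be reached by more than $O(n/k)$ non-hubs,'' whereas you spell out the re-charging to the Voronoi cell of the added witness site $q$ in $\mathrm{Vor}(N(t)\cup\{q\})$, which is indeed exactly the reason the theorem is stated for an arbitrary auxiliary set $S$. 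That part is correct and well done.

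There is, however, a genuine gap in your count of combinatorial changes. You assert that hubs are ``permanently assigned a transmission radius large enough to reach every other point'' and that ``a combinatorial change occurs precisely when some non-hub changes its nearest hub.'' In the kinetic model this is not tenable: no fixed numerical radius can reach every other point for all time (points drift apart), which is precisely why the paper describes the assignment combinatorially via a function $f(p,t)$. For a hub $p$, $f(p,t)$ must be its \emph{furthest} point of $P$ at time $t$, and the identity of that furthest point changes over time; each such change is a combinatorial change of the network and must be counted. The paper handles this by bounding, for each of the $O(\sqrt{n\log n})$ hubs, the complexity of the \emph{upper} envelope of the $n$ distance functions $\{d(p(t),p'(t)) \colon p'\in P\}$, contributing $O(\sqrt{n\log n})\cdot O^*(n)=O^*(n^{1.5}\sqrt{\log n})$ changes --- the same order as your non-hub term, so the theorem's bound survives, but your accounting covers only half of the events. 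Adding this hub-side upper-envelope count (a routine symmetric argument to your lower-envelope one) completes the proof.
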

\begin{proof}
We use Corollary~\ref{balanced-assignment} for some value of $k$ that will be determined later. We obtain a set $N$ of size $O(k\log k)$ with the properties guaranteed by Corollary~\ref{balanced-assignment}. The elements of $N$ are called {\em hubs}, and we assign to each of them the largest possible radius. That is, at any instance of time $t\geq 0$, a point $p \in N$ is assigned the distance to its furthest point in $P$. In other words, $f(p,t)$ is equal to the point $q \in P$ that maximizes the distance $d(p(t),q(t))$. Other points of $P$ are assigned the distance to their nearest hub. More formally,  $f(p,t)$, for a point $p \in P \setminus N$, is equal to the point $q\in N$ that minimizes the distance $d(p(t),q(t))$. Equivalently, if we consider the Voronoi diagram with sites $N(t)$, the function $f(p,t)$ will match $p(t)$ with the site associated to the Voronoi cell that contains $p(t)$ at time $t$.

First observe that the network is connected: indeed, all hubs are connected to each other forming a clique. Moreover, each point of $P\setminus N$ has radius large enough to reach one point of $N$. In particular, any two points of $P$ can reach each other after hopping through at most two intermediate sensors of $N$ (thus, the constructed network has diameter $3$).

We now pick the correct value of $k$ so that the interference of this protocol is minimized. Since $N$ has $O(k\log k)$ points, the overall interference contribution by hubs is bounded by the same amount. By Corollary~\ref{sparse-cell}, we also know that no point $q\in \Re^d$ can be reached by more than $O(n/k)$ points of $P\setminus N$ at any instance of time. That is, the total interference of any point $q\in \Re^d$ is at most $O(k\log k)$ from hubs, and at most $O(n/k)$ from non-hubs. Thus, by setting $k=\sqrt{n/\log n}$ we obtain the claimed bound.

We now bound the total number of combinatorial changes that will happen to the network along time. Let $p\in P$, we will show that the number of combinatorial changes of $p$ is bounded. Recall that, if $p$ is a hub it will connect to its furthest away point of $P$. Otherwise, $p$ will connect to its nearest hub. In either case, it suffices to bound the number of combinatorial changes of the nearest/furthest point within a group of moving points with respect to the moving point $p$. Equivalently, we are looking at the number of combinatorial changes of the upper envelope of the family of functions $\mathcal{F}_1=\{d(p(t),p'(t)) \colon p'\in P\}$ for points $p \in N$, or the lower envelope of the family of functions $\mathcal{F}_2=\{d(p(t),p'(t)) \colon p'\in N\}$ for points $p \not\in N$. By the bounded description complexity assumption, functions of $\mathcal{F}_1$ and $\mathcal{F}_2$ are such that the graphs of any pair of them cross $O(s)$ times. Thus, by the Davenport-Schinzel Theorem we can bound the number of combinatorial changes of the upper envelope of $\mathcal{F}_1$ by $O(\lambda_{O(s)}(n))$, where $\lambda_{t}(m)$ denotes the maximum length of a Davenport-Schinzel sequence of order $t$ on $m$ symbols. Similarly, the number of changes of the lower envelope of $\mathcal{F}_2$ is bounded by $O(\lambda_{O(s)}(\sqrt{n\log n}))$.

Ignoring the terms that depend on the inverse of the Ackermann function, we have that for any fixed constant $s$, $\lambda_{t}(m)=O^*(m)$. Combining this with the fact that we have $O(\sqrt{n\log n})$ hubs and at most $n$ non-hub points, the overall number of combinatorial changes is bounded by $O^*(n\times \sqrt{n\log n}+\sqrt{n\log n}\times n)=O^*(n^{1.5}\sqrt{\log n})$ as claimed.
\end{proof}

\section{Other Applications}\label{sec-other}

In this section we mention a few additional results that directly follow from Theorem~\ref{constant-description}. We hope that further research will reveal other interesting applications that stem from this or similar theorems.

\subsection{Approximate kinetic range counting}
Range counting is the problem of counting how many points are present in a given query range. More precisely, given a set $P$ of $n$ points in $\Re^d$ the goal is to preprocess these points so that given a query range $r$ (usually a halfspace, a sphere or some similar simple shape) we can determine the number of points in $r \cap P$. Exact range counting is difficult and the best results require superlinear memory or have query times polynomial in $n$~\cite{chazelle1989lower}. Consequently, more research has gone towards approximate range counting.

The problem of range counting can be approximated in several ways. First, one could base the approximation on the range. That is, we require that points that are far from the boundary of the query range are properly counted, but nothing is required of the points that lie close to the boundary.
That is, points that are close to the boundary of the query range may or may not be counted, but those clearly inside the range are guaranteed to be counted.
This form of approximation for the kinetic setting was considered by Abam~\etal~\cite{abs-kkdtlskdt-09}.

Another way to approximate range counting is simply by the number of reported points. When the number of points within the range is $k$ we wish to report a number $k'$ so that $(1-\eps) k \leq k' \leq (1+\eps) k$. It is difficult to certify such type of approximation, since ranges that contain few points must often report the exact number (in particular, we should be able to perform exact emptyness queries). To avoid this issue a common standard for approximate range counting is to use an $\eps$-approximation:

\begin{definition}\label{def:eps-approximation}
Let $(P,\R)$ be a hypergraph. A subset $A \subset P$ (not necessarily a hyperedge) is called $\eps$-approximation
if for any range $r \in \R$ the following holds:
$$
\left|\frac{\cardin{r \cap A}}{\cardin{A}}- \frac{\cardin{r \cap P}}{\cardin{P}}\right| \leq \eps.
$$
\end{definition}

In other words, $A$ is a sample of the points that represents the size of the hyperedges in the underlying hypergraph up to an absolute error $\eps$. It is straightforward to verify that every $\eps$-approximation is also an $\eps$-net, but the reverse does not always hold. In general, it is known that if $(P,\R)$ has $\VC$-dimension $d$ then a random sample of size $O(\frac{d}{\eps^2})$ is an $\eps$-approximation with at least some positive constant probability \cite{Tal94,LLS01}.

It is straightforward to apply this generalization of $\eps$-nets to obtain an approximation for range counting: construct an $\eps$-approximation $A$ of $P$ and construct an exact range counting structure on $A$. Then for each query range $r$ we perform the query on $A$ and multiply the result by $|P| / |A|$.

The following theorem follows immediately from Theorem~\ref{constant-description}. %
\begin{theorem}\label{theo_epsapprox}
Let $P$ be a set of moving points in $\Re^d$ with bounded description complexity and let $\cal R$ be a family of regions with bounded description complexity. Then for any $\eps \in (0,1]$ the kinetic hypergraph $(P, \cal R)$ admits an $\eps$-approximation of size $O(\frac{1}{\eps^2})$.
\end{theorem}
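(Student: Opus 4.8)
The plan is to derive the statement as a direct corollary of two facts that are already at hand: the bounded \VC-dimension of kinetic hypergraphs and the standard random-sampling bound for $\eps$-approximations. First I would observe that the hypotheses match those of Theorem~\ref{constant-description}: the assumption that $\cal R$ is a family of regions of bounded description complexity is precisely the requirement that each range be a semi-algebraic set expressible as a Boolean combination of a constant number of polynomial (in)equalities of constant degree, and $P$ moves with bounded description complexity. Hence Theorem~\ref{constant-description} applies and tells us that the kinetic hypergraph $(P,\cal R)$ has \VC-dimension bounded by some constant $d_0$, where $d_0$ depends only on the fixed quantities $d$, the motion complexity $s$, and the degree bound of the ranges; in particular, $d_0$ is independent of $n=\cardin{P}$ and of $\eps$.

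Next I would invoke the classical $\eps$-approximation theorem for finite hypergraphs of bounded \VC-dimension (stated in the discussion preceding Definition~\ref{def:eps-approximation} and attributed to \cite{Tal94,LLS01}): for any finite hypergraph of \VC-dimension $d_0$ and any $\eps\in(0,1]$, a uniform random sample of $O(d_0/\eps^2)$ vertices is an $\eps$-approximation with at least a constant positive probability. The kinetic hypergraph $(P,\E)$ has the finite vertex set $P$, so its edge set $\E\subseteq 2^P$ is automatically finite, and the theorem applies without modification. A positive success probability in particular guarantees that an $\eps$-approximation of the stated size exists; since $d_0$ is a constant, the size is $O(1/\eps^2)$, exactly the claimed bound.

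Since both imported results are used as black boxes, I do not expect any genuine obstacle: the real content has already been absorbed into Theorem~\ref{constant-description}. The only points that require a moment of care are bookkeeping ones, namely checking that the description-complexity hypothesis on $\cal R$ is indeed what Theorem~\ref{constant-description} needs, noting that finiteness of $P$ makes $(P,\E)$ a finite range space so the sampling theorem is directly applicable, and recording that the constant hidden in $O(1/\eps^2)$ absorbs the \VC-dimension $d_0$ (and hence depends on $d$, $s$, and the degree of the ranges, but not on $n$ or $\eps$).
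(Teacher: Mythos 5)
Your proof is correct and follows exactly the paper's route: the paper also obtains this statement as an immediate consequence of Theorem~\ref{constant-description} combined with the standard random-sampling bound for $\eps$-approximations in bounded \VC-dimension hypergraphs cited from \cite{Tal94,LLS01}. Your additional bookkeeping remarks (matching the description-complexity hypotheses, finiteness of the range space, and absorbing the constant \VC-dimension into the $O(1/\eps^2)$ bound) merely spell out what the paper leaves implicit.
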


Notice that, as with $\eps$-nets, the set $A$ does not change throughout the motion. Using this result, we can apply the $\eps$-approximation-based approach for approximate range counting in the kinetic case.
We obtain the same running time as in the static case. 

\begin{corollary}
Let $P$ be a set of $n$ moving points in $\Re^d$ with bounded description complexity and let $\cal R$ be a family of regions with bounded description complexity. We can build an approximate range counting data structure using $\sigma$ space, for $m \leq \sigma \leq m^d$ and $m = \min(n, 1/\eps^2)$, in $O(n + m^{1+\delta} + \sigma (\log m)^\delta)$ time that answers queries in $O(\frac{m}{\sigma^{1/d}} \log^{d+1} \frac{\sigma}{m})$ time, for an arbitrarily small constant $\delta>0$. The relation between $k$, the reported number of points, and $\ell$, the real number of points in the range, is defined by
$\left|\frac{k - \ell}{n}\right| \leq \eps$.
\end{corollary}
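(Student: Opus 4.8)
The plan is to reduce kinetic approximate range counting to \emph{static, exact} range counting on a single small, \emph{fixed} point set, namely the $\eps$-approximation guaranteed by Theorem~\ref{theo_epsapprox}. Set $m=\min(n,1/\eps^2)$. If $1/\eps^2\ge n$ there is nothing to approximate: we simply keep all of $P$, so that $A=P$, $m=n$, and the reported count is exact. Otherwise Theorem~\ref{theo_epsapprox} yields a fixed subset $A\subseteq P$ of size $O(1/\eps^2)=O(m)$ that is an $\eps$-approximation of the kinetic hypergraph $(P,\R)$. Computing $A$ amounts to drawing a random sample and can be done in $O(n)$ time, which accounts for the leading $O(n)$ term in the preprocessing bound.

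The key observation is that the kinetic $\eps$-approximation property specializes, at every instant, to the ordinary static one. Fix a time $t$ and a range $r\in\R$; the subset $\{p\in P\colon p(t)\in r\}$ is by definition a hyperedge of the kinetic hypergraph, so Definition~\ref{def:eps-approximation} gives $\bigl|\,\cardin{A(t)\cap r}/\cardin{A}-\cardin{P(t)\cap r}/\cardin{P}\,\bigr|\le\eps$. Hence at each query time $A(t)$ is a genuine static $\eps$-approximation of $P(t)$ with respect to $\R$. I would then build the classical static range-counting data structure~\cite{MATOUSEK} on the $m$ points of $A$ with the stated space/preprocessing tradeoff: space $\sigma$ with $m\le\sigma\le m^d$, built in $O(m^{1+\delta}+\sigma(\log m)^\delta)$ time and answering a range query in $O\pth{\tfrac{m}{\sigma^{1/d}}\log^{d+1}\tfrac{\sigma}{m}}$ time. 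To answer a query $(r,t)$ we compute $c=\cardin{A(t)\cap r}$ and report $k=c\cdot n/\cardin{A}$. The error guarantee is then immediate: with $\ell=\cardin{P(t)\cap r}$ we get $\bigl|(k-\ell)/n\bigr|=\bigl|\,c/\cardin{A}-\ell/n\,\bigr|\le\eps$, exactly as claimed.

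The one point requiring care, and the main obstacle, is reconciling the static range-searching structure with the fact that the positions $A(t)$ move even though the index set $A$ stays fixed. I would handle this by noting that, since $\cardin{A}=m$ is independent of $n$ and the trajectories have bounded description complexity $s$, the number of combinatorial changes to any range-searching structure built on $A$ depends only on $m$ and $s$ (bounded via the same low-degree-polynomial / Davenport--Schinzel arguments used earlier in the paper). Consequently the structure on $A(t)$ can be maintained kinetically, or simply rebuilt at the query time, without ever introducing a dependence on $n$ in the query cost; the query time therefore coincides with the static bound on $m$ points, which is precisely the sense in which we obtain the same running time as in the static case. The remaining verification, that ranges of bounded description complexity fall within the class for which the tradeoff of~\cite{MATOUSEK} applies, is routine and I would omit it.
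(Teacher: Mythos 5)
Your proposal follows essentially the same route as the paper's own (largely implicit) argument: take the fixed kinetic $\eps$-approximation $A$ of size $O(\min(n,1/\eps^2))$ from Theorem~\ref{theo_epsapprox}, build the standard static exact range-counting structure with the stated space/query tradeoff on $A$, answer a query by scaling the count on $A$ by $n/\cardin{A}$, and note that the error bound $\left|(k-\ell)/n\right| \leq \eps$ is immediate from Definition~\ref{def:eps-approximation} since each time-$t$ range is a hyperedge of the kinetic hypergraph. One small caution: your fallback of ``simply rebuilding at query time'' would exceed the claimed query bound, so only your primary option (kinetic maintenance of the structure on $A$, whose cost depends on $m$ and $s$ but not $n$) is consistent with the corollary, and that is also what the paper tacitly assumes when it asserts the running times match the static case.
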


\subsection{Discrepancy of kinetic range spaces}
Intuitively speaking, we say that a hypergraph $H=(V,\E)$ has small discrepancy if we can color its vertices with two colors, say `red' and `blue', such that the difference between
the red points and the blue points in every hyperedge is small.
A more formal definition is as follows: given a hypergraph $H=(V,\E)$, a {\em 2-coloring} of $H$ is a function $\chi: V \rightarrow \{-1,1\}$.
For a hyperedge $S \in \E$ let $\chi(S) = \sum_{v \in S} \chi(v)$, and $disc(H) = min_{\chi} max_{S \in \E} \cardin{\chi(S)}$.
We call $disc(H)$ the  {\em discrepancy} of $H$. In other words the discrepancy of $H$ is the difference between the number of red and blue points in the most imbalanced hyperedge in the `best' red-blue coloring possible for $H$.
The notion of discrepancy of a hypergraph is one of the deepest notions in combinatorics and has many applications. 

The following well known theorem provides a bound on the discrepancy of a hypergraph in terms of its shatter function; see, e.g., \cite{MATOUSEK-disc}.
\begin{theorem}[Primal shatter function bound]
\label{primal-shatter-bound}
Let $d > 1$ and $C$ be constants, and let $H=(V,\E)$ be a hypergraph on $n$ vertices with primal shatter
function satisfying $\pi_H(m) \leq  Cm^D$ for all $m \leq \cardin{V}$. Then
$disc(H) \leq C'n^{\frac{1}{2}-\frac{1}{2d}}$ where the constant $C'$ depends on $C$ and $D$.
\end{theorem}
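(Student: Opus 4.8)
The plan is to use the \emph{partial coloring} (entropy) method, which is the standard way to convert a polynomial shatter-function bound into a discrepancy bound. The entire argument reduces to a single ``half-coloring'' step that is then iterated. Concretely, I would first establish that on any vertex set of size $u$ carrying the inherited bound $\pi_H(m)\le Cm^d$ (recall that a shatter-function bound passes to induced subhypergraphs), there is a partial coloring $\chi\colon V\to\{-1,0,+1\}$ that leaves at most $u/2$ vertices uncolored and keeps every hyperedge imbalance small. Granting such a step, I would apply it repeatedly: half-color the current set, pass to the induced subhypergraph on the uncolored remainder (which still satisfies $\pi_H(m)\le Cm^d$), and recurse until only a constant number of vertices survive, coloring those arbitrarily. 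The final color of a vertex is the sign it receives in the round that colors it, so the discrepancy of a hyperedge is at most the sum of its per-round imbalances. Since the vertex set roughly halves each round, if a round on $u$ points guarantees imbalance $O(u^{1/2-1/(2d)})$ then these contributions form a geometric series dominated by the first round, giving the claimed $O(n^{1/2-1/(2d)})$.

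The core is thus the half-coloring step, and here I would invoke the entropy/pigeonhole partial-coloring lemma. Group the hyperedges into dyadic size classes $\E_j=\{S\in\E: 2^{j-1}\le|S|<2^{j}\}$, $1\le j\le\lceil\log_2 u\rceil$. The decisive structural input is a bound on $|\E_j|$ supplied by the shatter function: a standard random-sampling (Clarkson--Shor type) argument turns $\pi_H(m)\le Cm^d$ into a bound of the form $|\{S\in\E:|S|\le\ell\}|=O(u\,\ell^{\,d-1})$. I would then assign to the sets of class $\E_j$ a common discrepancy budget $\Delta_j$ and apply the partial-coloring lemma, which produces the required half-coloring with $|\chi(S)|\le\Delta_S$ for all $S$ as long as the total entropy $\sum_j|\E_j|\,g(\Delta_j/2^{\,j/2})$ stays below $u/5$, where $g$ is the usual entropy function (a fixed function that decays faster than any polynomial once its argument exceeds $1$). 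Choosing the budget profile $\{\Delta_j\}$ that minimizes $\max_j\Delta_j$ subject to this constraint, and feeding in the class-size bound $|\E_j|=O(u\,2^{\,j(d-1)})$, is exactly the computation that produces the exponent $\tfrac12-\tfrac1{2d}$ for the first round.

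The main obstacle is this entropy estimate. A naive union bound over the $\Theta(u^{d})$ hyperedges, or a uniform budget, only delivers imbalance $O(\sqrt{u\log u})$, which is \emph{polynomially} weaker than the target and cannot be rescued by the geometric summation. The polynomial saving must come from spending the entropy budget class by class---allocating almost nothing to the numerous but small classes and balancing the remainder across the larger ones---so the delicate point is to verify that the optimized profile simultaneously respects $\sum_j|\E_j|\,g(\Delta_j/2^{\,j/2})\le u/5$ and achieves $\max_j\Delta_j=O(u^{1/2-1/(2d)})$; this is where the hereditary shatter-function structure (equivalently, Haussler-type packing bounds limiting how many hyperedges are pairwise far apart) is essential and a black-box union bound fails. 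Once the first-round bound is in hand, the summation over the $O(\log n)$ rounds is routine and yields $disc(H)=O(n^{1/2-1/(2d)})$ with a constant depending only on $C$ and $d$, as claimed.
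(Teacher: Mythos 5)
You should first note that the paper itself gives no proof of Theorem~\ref{primal-shatter-bound}: it is quoted as a known result (Matou\v{s}ek's primal shatter function bound) with a citation, so your attempt has to be measured against the standard proof, which combines iterated partial coloring with Haussler's packing lemma and a chaining decomposition. Your outer scaffolding is exactly right: the shatter-function bound is hereditary, a half-coloring with imbalance $O(u^{1/2-1/(2d)})$ on $u$ points can be iterated on the uncolored remainder, and the geometric series over rounds is dominated by the first round. The gap is inside the core half-coloring step. The counting bound you feed into the entropy calculation, $|\{S\in\E: |S|\le \ell\}| = O(u\,\ell^{\,d-1})$, does not follow from $\pi_H(m)\le Cm^d$ and is false in general: take $\E$ to be all two-element subsets of $V$, so that $\pi_H(m)=1+m+\binom{m}{2}=O(m^2)$ (hence $d=2$), yet the number of sets of size $\le 2$ is $\binom{u}{2}$, not $O(u)$. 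Geometric examples exist as well; for halfspaces acting on points of the moment curve in $\Re^4$ every pair is cut off, while $\pi(m)=O(m^4)$. Clarkson--Shor type bounds on $(\le\ell)$-sets are a property of the particular geometry (and for halfspaces have the weaker form $O(u^{\lfloor d/2\rfloor}\ell^{\lceil d/2\rceil})$), not a consequence of a shatter-function hypothesis, which is all this theorem assumes.

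More fundamentally, even if such a bound held, dyadic classification by \emph{set size} with one entropy budget per hyperedge cannot beat $O(\sqrt{u\log u})$: the top class, sets of size $\Theta(u)$, may contain all $\Theta(u^d)$ hyperedges (your class-size bound is vacuous there), and giving each of $u^d$ sets of size $\Theta(u)$ its own budget $\Delta$ forces $u^d e^{-\Omega(\Delta^2/u)}\le u/5$, i.e.\ $\Delta=\Omega(\sqrt{u\log u})$. The polynomial saving in the actual proof comes from classifying by \emph{symmetric-difference scale}, not by size. Haussler's packing lemma states that any subfamily whose members are pairwise $\delta$-separated in symmetric difference has at most $O((u/\delta)^d)$ members; one builds such packings at scales $\delta=u/2^i$, approximates each hyperedge by a chain of representatives across scales, and applies the partial coloring lemma to the difference sets between consecutive representatives, which are simultaneously small and few --- that is the configuration in which the entropy optimization yields $u^{1/2-1/(2d)}$. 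Your closing remark that the packing bounds are ``equivalently'' what your argument uses is precisely where it breaks: packing limits the number of pairwise-\emph{far} sets, not the number of \emph{small} sets, and exploiting it forces the chaining decomposition, which a classification by cardinality has no way to accommodate. So the proposal, as written, does not establish the theorem; repaired along the lines above it becomes the standard proof.
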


The following theorem is an immediate consequence of Theorem~\ref{primal-shatter-bound} together with Theorem~\ref{kinetic-halfspaces}.
\begin{theorem}
Let $P$ be a set of $n$ moving points in $\Re^d$ with bounded description complexity $s$ and let ${\cal H}_d$ denote the family of all halfspaces. The kinetic hypergraph $H=(P,{\cal H}_d)$ satisfies $ disc(H) = O(n^{\frac{1}{2}-\frac{1}{2d+2}})$.
\end{theorem}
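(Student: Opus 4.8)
The plan is to reduce everything to a sharp bound on the primal shatter function of the kinetic halfspace hypergraph $H=(P,{\cal H}_d)$ and then feed that bound into Theorem~\ref{primal-shatter-bound}. Observe first that the target exponent factors as $\frac{1}{2}-\frac{1}{2d+2}=\frac{1}{2}-\frac{1}{2(d+1)}$, which is exactly the output of Theorem~\ref{primal-shatter-bound} when the shatter function is bounded by a polynomial of degree $d+1$. So the entire content of the statement is the sharp estimate $\pi_H(m)=O(m^{d+1})$. I would emphasize that the crude estimate recorded inside the proof of Theorem~\ref{kinetic-halfspaces} is \emph{not} good enough here: there the number of affine-dependency events ($O(m^{d+1})$) is multiplied by the per-instant hyperedge count ($O(m^d)$), giving only $\pi_H(m)=O(m^{2d+1})$, which would yield the weaker exponent $\frac{1}{2}-\frac{1}{4d+2}$. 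Since discrepancy is exponent-sensitive, the degree must be sharpened.

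To get $\pi_H(m)=O(m^{d+1})$, I would count cells in a single arrangement rather than summing over time intervals. Fix any subset $P'\subseteq P$ of $m$ points, and parametrize a non-vertical halfspace in $\Re^d$ by the $d$ dual coordinates $c=(c_0,\ldots,c_{d-1})$ of its bounding hyperplane $x_d=c_{d-1}x_{d-1}+\cdots+c_1x_1+c_0$. The combined parameter space of ``a halfspace together with a time instant'' is then $(t,c)\in\Re^{d+1}$. For each $p_i\in P'$ define the polynomial $F_i(t,c)=x^i_d(t)-c_{d-1}x^i_{d-1}(t)-\cdots-c_1x^i_1(t)-c_0$; its sign records whether $p_i(t)$ is above or below the hyperplane, and every kinetic halfspace hyperedge is precisely the index set on which the $F_i$ have a prescribed sign. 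Each zero set $\{F_i=0\}$ is an algebraic hypersurface in $\Re^{d+1}$ of degree at most $s+1$, a constant. By the Milnor--Thom / Warren sign-pattern bound, $m$ such bounded-degree polynomials in $d+1$ variables realize only $O(m^{d+1})$ distinct sign vectors; since a hyperedge is constant on each cell of the induced arrangement, at most $O(m^{d+1})$ distinct hyperedges can arise. (Vertical hyperplanes, missed by this chart, form a lower-dimensional family and are absorbed by a standard rotation or symbolic-perturbation argument, exactly as in the static case.) As this holds for every $m$-point subset, the bound is hereditary and gives $\pi_H(m)=O(m^{d+1})$.

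Finally I would invoke Theorem~\ref{primal-shatter-bound} with shatter-function degree $d+1$. Since $d+1\ge 2>1$, the hypothesis is satisfied, and the theorem yields $disc(H)=O\pth{n^{\frac{1}{2}-\frac{1}{2(d+1)}}}=O\pth{n^{\frac{1}{2}-\frac{1}{2d+2}}}$, as claimed.

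I expect the only real obstacle to be this exponent bookkeeping. Conceptually the point is that adjoining the single time parameter enlarges the halfspace parameter space from dimension $d$ to dimension $d+1$, so a correct count adds exactly one factor of $m$ to the $O(m^d)$ static bound; the extra factor lost by treating events and instants separately is an artifact of the coarser argument that suffices for the \VC-dimension bound but not for discrepancy.
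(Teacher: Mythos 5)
Your proof is correct, and it supplies a step that the paper itself glosses over. The paper's entire proof is the assertion that the theorem is an ``immediate consequence'' of Theorem~\ref{primal-shatter-bound} together with Theorem~\ref{kinetic-halfspaces}, and your diagnosis of why that citation is not literally sufficient is exactly right: the shatter-function estimate actually recorded in the proof of Theorem~\ref{kinetic-halfspaces} is $\pi_H(m)=O(m^{2d+1+\log(ds)})$ (number of degeneracy events multiplied by the per-interval hyperedge count), which via Theorem~\ref{primal-shatter-bound} gives only $disc(H)=O\pth{n^{\frac{1}{2}-\frac{1}{4d+2+2\log(ds)}}}$; using just the VC-dimension bound in the statement of Theorem~\ref{kinetic-halfspaces} is weaker still. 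The claimed exponent $\frac{1}{2}-\frac{1}{2d+2}$ needs exactly $\pi_H(m)=O(m^{d+1})$, which is true but nowhere stated in the paper, and your Warren/Milnor--Thom sign-pattern count in the $(d+1)$-dimensional $(t,c)$-parameter space proves it cleanly: each $F_i(t,c)$ has constant degree at most $s+1$, a hyperedge cut off by a closed halfspace is realized by a strict sign vector after an infinitesimal decrease of $c_0$, vertical halfspaces are absorbed by tilting at each fixed time, and the count is hereditary since the argument applies verbatim to any $m$-point subset. An alternative repair, closer in spirit to the paper's own event-counting, is to observe that each of the $O(dsm^{d+1})$ affine-degeneracy events changes only $O(d)$ hyperedges (namely those witnessed by hyperplanes through $d$ of the $d+1$ degenerate points), so the total number of hyperedges over all time is $O(m^d)+O(d^2 s\, m^{d+1})=O(m^{d+1})$, rather than the paper's product of events and per-interval counts. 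Either way, what you wrote is the argument the stated exponent actually requires, and your final application of Theorem~\ref{primal-shatter-bound} with shatter degree $d+1>1$ is exactly how the paper intends the bound to be assembled.
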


\section{Conclusion}\label{sec-conclusions}
Using the the machinery of \VC-dimension we have shown that the difference between static and kinetic environments for our facility location problem is small. We believe that a similar approach can be used for other problems.
We hope that future research will show other interesting applications of $\varepsilon$-nets in kinetic environments.

In Section~\ref{sec-applis-balanced-vor} we argued that it is unlikely that the ``balanced'' property can be significantly improved. Similarly, it seems unlikely that the ``reasonable'' constraint can be removed, even in one dimension.
Indeed, if points are allowed to move arbitrarily, they can create all $n!$ orderings along time. In particular, for any set $N\subset P$ we can always find a time and range that contains all points of $P\setminus N$. Thus, no subset $N \subset P$ can act as an $\eps$-net for all instances of time. Further note that, since the alternation in orderings can be repeated arbitrarily many times, the number of times that we need to change the set $N$ can also be unbounded. This behaviour can be created with trigonometric functions of low description complexity.

\section*{Acknowledgements}
This work was initiated during the Second Sendai Winter Workshop on Discrete and Computational Geometry. The authors would like to thank the other participants for interesting discussions during the workshop, as well as Alexandre Rok for helpful discussions.

\bibliographystyle{plain}
\bibliography{references}

\end{document}